\apptocmd{\sloppy}{\hbadness 10000\relax}{}{} 
\newtheorem{theorem}{Theorem}
\newtheorem{assume}[]{Assumption}
\begin{document}

\title{Set-Estimation based Networked Model Predictive Control for Energy Management of Faulty Microgrids}

\author{Quanwei~Qiu*,
       ~Fuwen~Yang,~\IEEEmembership{Senior Member,~IEEE,}
        and Yong~Zhu,~\IEEEmembership{Senior Member,~IEEE}%
\thanks{ * Corresponding author: Quanwei Qiu, E-mail: quanwei.qiu@outlook.com.}
\thanks{ Q. Qiu, F. Yang and Y. Zhu are with the School of Engineering and Built Environment, Griffith University, Gold Coast, 4222, Australia (e-mail: quanwei.qiu@outlook.com, fuwen.yang@griffith.edu.au, y.zhu@griffith.edu.au).}
}

\maketitle

\begin{abstract}
This paper addresses the issue of power flow control for partially faulty microgrids. In microgrid control systems, faults may occur in both electrical and communication layers. This may have severe effects on the operation of microgrids.
In addition, disturbances always coexist with faults in microgrids, which may further deteriorate system performance. To address the faults and disturbances simultaneously, a model predictive control (MPC) method based on set-membership estimation (SME) that transmits information via a communication network is proposed.
When electrical devices are nonfunctional or communication failures occur, the corresponding system states will become unavailable.
To this end, the SME method is employed to estimate the states with the existence of unknown-but-bounded process and measurement disturbances.
The networked MPC method is designed to schedule the power dispatch by using the forecasts of photovoltaic (PV) generation and load demand.
With these two methods, the fault-tolerant control can be achieved. Further, a deviation compensation method is proposed to compensate for the forecast errors. The effectiveness of the proposed control strategy is demonstrated through wireless communication tests using Raspberry $Pi$s.
\end{abstract}

\begin{IEEEkeywords}
 Model predictive control, set-membership estimation, multi-microgrid system, energy management system, fault-tolerant control
\end{IEEEkeywords}

\markboth{MANUSCRIPT}{}

\section{Introduction}
\IEEEPARstart{A}{n} increasing number of renewable energy sources (RESs), such as solar panels and wind turbines, are being integrated into power grids for environmental, economical, and technical reasons~\cite{olivares2014tsg}.
To manage the distributed energy resources, storage devices, and loads efficiently, microgrid technology was proposed in~\cite{lasseter2001pesw} and has since been widely studied and used.
However, this integration of RESs has introduced some challenges to the operation of microgrids.
Typically, the intermittent generation of RESs may cause power fluctuations in microgrids. The inertia of microgrid systems is reduced with the penetration of RESs, because RESs usually use power electronic converters as the interface with the power grid, which differs from traditional synchronous generators.
To address these two issues, battery storages are commonly used to diminish the power fluctuation and enhance the inertia~\cite{teleke2009tec,kim2016tps}.
Hence, it is necessary to develop proper energy management strategies to coordinate the power flow in microgrids. These kinds of control strategies correspond to the tertiary control of the microgrid hierarchical structure, which is responsible for the power flow control between the different generation and storage units within each microgrid and between microgrids and the utility grid~\cite{bidram2012tsg,olivares2014tsg}.

In addition to the issues of intermittency and low inertia caused by RESs, possible faults in microgrids may also severely affect system operation. Therefore, faults must be treated seriously in the control design of microgrids.
In the literature, researchers have studied many faults in the design of the three control levels of microgrid hierarchical structure.
Generally, faults in microgrid systems can be categorized as four types according to their location in microgrid systems: actuator faults, sensor faults, communication faults, and plant faults~\cite{Shahab2020tsg}.
Reference~\cite{Shahab2020tsg} proposes a distributed secondary control for islanded microgrids with actuator faults.
Reference~\cite{morato2020ijepes} presents an energy management strategy for the tertiary control level to deal with the actuator faults.
Sensor fault-tolerant control strategies are designed for the voltage source converter controller in the primary control level of microgrids~\cite{gholami2016pesgm,gholami2018ijepes}, whereas sensor and actuator faults are addressed simultaneously to restore the voltage and frequency of secondary control in~\cite{afshari2020tps}.
Communication failures are taken into account in the design of a microgrid energy management system in~\cite{loser2019fallback}, and an unreliable communication network is considered in the cooperative secondary control with the existence of actuator faults in~\cite{afshari2020tsg}.
Compared with the three types of faults described above, plant faults have attracted the most attention in the literature, as they are related to common electrical faults in power systems~\cite{genduso2010icem,jin2018tie,hosseinzadeh2018tsg,prodan2014ijepes,prodan2015energy}.
Most of these works focus on the faults occurring in power devices---for example, electronic converters~\cite{genduso2010icem,jin2018tie}, energy storage systems~\cite{hosseinzadeh2018tsg}, and renewable generators~\cite{prodan2014ijepes,prodan2015energy,hosseinzadeh2018tsg}.

However, to the best of the authors’ knowledge, none of the literature has investigated plant faults and communication faults together for the tertiary control of microgrids while also taking into account disturbances.
In practice, a microgrid may become partially faulty, which may be caused by disconnected transmission lines or nonfunctional power devices.
Since RESs are usually small-scale and distributed geographically, communication between microgrid components is necessary for power-sharing management~\cite{ouammi2015tsg,yan2019tsg}.
Yet communication networks may be unreliable, as they are vulnerable to faults (or even failures).
Moreover, disturbances in system process or measurement output can significantly deteriorate system performance if proper treatment is not administered.
Therefore, it is essential to develop a method for addressing faults and disturbances simultaneously.

When battery storage within microgrids is nonfunctional or its corresponding communication link is down, its state will become unavailable.
To estimate the state, numerous state estimation approaches have been used in the existing literature \cite{zhang2014tie,dai2015am,shen2010am,ge2019tc1148}. These studies have mainly adopted two approaches: Kalman filtering and $H_\infty$ filtering. Yet, both these estimation approaches are still point-wise and focus on optimizing the estimation error. The statistical properties of disturbances are usually required in these estimation approaches. To overcome these limitations, an alternative set-membership filtering, which calculates a bounding ellipsoidal set that encloses all possible state estimations with the assumption that the noises are unknown-but-bounded, has been proposed and extensively investigated in many works \cite{witsenhausen1968tac,yang2009am,yang2009tac,qiu2020cta}.

One of the most effective methods for dealing with the faults that occur in RES-based microgrids is model predictive control (MPC). This is because the MPC technique has the advantage of employing forecasts conveniently, which is significant in terms of addressing the intermittency of RES generation.
By using the forecasts of RES generation and load demand, MPC predicts a sequence of control inputs that can optimally schedule the power dispatch for the faulty microgrids~\cite{brenna2018tsg}. In addition, the control sequence can be used as default set-point power values for RESs and batteries. This enables the electrical devices to overcome communication failures by implementing the precomputed default values when there is no new control signal update.
Further, MPC can easily manipulate system constraints and is especially efficient in handling multi-variable constrained control problems and overcoming the uncertainty, nonlinearity, and correlation of system process. As a result, numerous studies have been conducted on using MPC to address the power dispatch problem of microgrids~\cite{worthmann2015tsg,braun2016tac,Halvgaard2016tsg,shan2020tsg}.

Inspired by these factors, this paper proposes a microgrid power dispatch (MPD) strategy that employs the set-membership estimation (SME)-based MPC.
The MPC is communication-based and designed in a centralized way: a communication network transmits information between the central controller and local units.
Hence, this paper refers to the MPC method as networked MPC and considers the communication faults in this communication network.
The SME is used to estimate the unavailable battery states caused by faults while considering unknown-but-bounded process disturbances and measurement noises.
It is worth noting that the electrical fault (e.g., electricity transmission line disconnection) does not imply the communication failure.
That means although the electrical connections between microgrid components fail in some scenarios, the communication network may still operate well and the component states may still be available.
Therefore, the connection status of the electrical grid and the communication network should be considered separately in the system modeling.
In addition, in most of the previous literature~\cite{parisio2014tcst,hans2019tse}, the forecasts for the RES generation and load demand are usually assumed to be certain. This assumption of correct prediction is reasonable to some extent, since one-day-ahead weather predictions to calculate RES generation power can be fairly accurate, and the residential loads tend to follow daily patterns~\cite{worthmann2015tsg}.
However, possible deviations from the forecasts cannot be ignored in practical application.
To handle this issue, a one-step-ahead compensation method is proposed to deal with the forecast errors.

In summary, the main contributions of this paper can be described as follows:
1) It proposes a novel system modeling method that incorporates the connection status of both the electrical grid and the communication network.
2) It designs the SME method to estimate the unavailable system states caused by faults and to handle the unknown-but-bounded disturbances.
3) It develops a power dispatch strategy using networked MPC that can maximize the utilization of RESs while satisfying the load demand under faulty conditions.

The remainder of this paper is organized as follows. Section \MakeUppercase{\romannumeral 2} models the investigated system. Based on this model, Section \MakeUppercase{\romannumeral 3} derives the ellipsoidal state estimation, and Section \MakeUppercase{\romannumeral 4} gives the MPC problem formulation. Section \MakeUppercase{\romannumeral 5} presents the deviation compensation approach and corresponding control algorithm. Section \MakeUppercase{\romannumeral 6} shows the simulation test results, and Section \MakeUppercase{\romannumeral 7} concludes the paper.

$\mathbf{Notation}$: Throughout this paper, $\mathbb{R}$ and $\mathbb{R}^n$ denote, respectively, the set of real numbers and the $n$ dimensional Euclidean space. The superscript ``$T$'' represents the transpose for a matrix. $|\cdot|$ stands for any absolute value of real numbers. $\textnormal{diag}(v)$ is used to create a square diagonal matrix with the elements of vector $v$ on the main diagonal.
The symbol $\wedge$ is defined as the AND logic operation of two inputs.
All the mathematical operations---comparison, minimization, maximization, and logic operation---are performed element-wise for the vectors.

\section{System Modelling}\label{sec:1}

\subsection{System description}
This work considers a multi-microgrid system (MMS) that is completely supplied by photovoltaic (PV) generators, as shown in Fig.~\ref{fig:2.1}.
Each solar panel is equipped with a battery to store the excess generated power and provide power to the load when required.
The central controller on the right-hand side of Fig.~\ref{fig:2.1} is the power dispatch controller that coordinates the charging and discharging schedule of the battery storages.
The solid lines in the figure represent the power transmission lines, while the dashed lines represent the communication network.
In each microgrid, a hub gathers information from the local microgrid units and then informs the central controller. It also distributes the control signals received from the central controller to local units via the communication network.

In the rest of this paper, similar to the adjacency matrix in graph theory, connection vectors are used to describe the connection status of the electrical grid and the communication network. All the elements in the connection vectors are Boolean, which are either $0$ or $1$.
$1$ means there is no fault, while $0$ means the electrical device is nonfunctional (or just disconnected from the grid) or the communication fails.

Note that the connection links within the communication network are bidirectional, which means if the value is $0$, the central controller cannot provide set-points to the unit, nor can it receive new measurements from the unit.
If a microgrid component is nonfunctional, both of its corresponding elements in the electrical and communication connection vectors will be set as $0$, since the component does not have power flow and cannot provide its state information in this faulty situation.
\begin{figure}[htbp]
  \centering
  \includegraphics[width=0.48\textwidth]{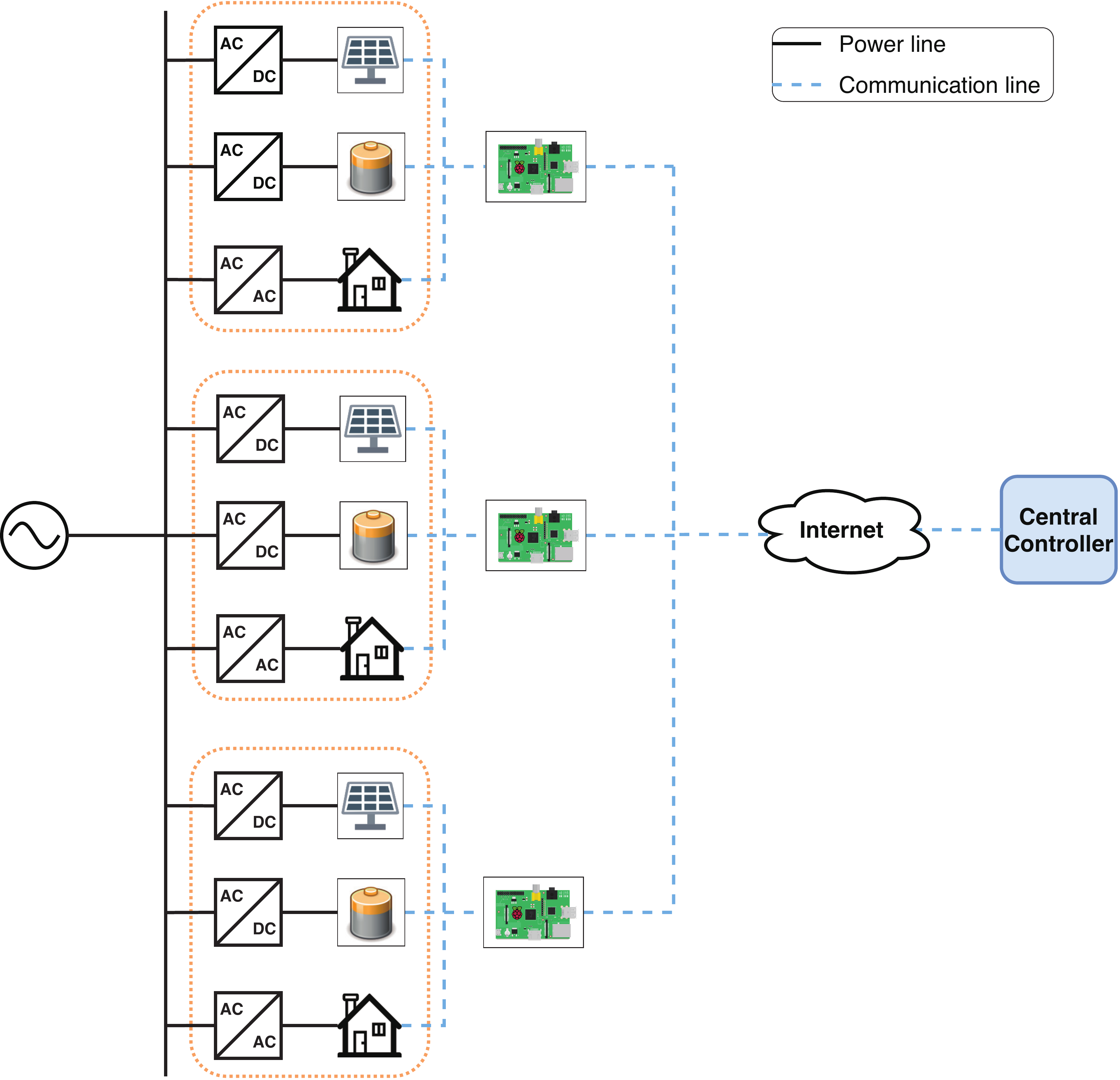}
  \caption{Multiple microgrids with a power dispatch controller}\label{fig:2.1}
\end{figure}

\subsection{Dynamic model of battery storages}
As in the MMS in Fig. \ref{fig:2.1}, the system is supplied by $n_s\in \mathbb{Z}^+$ PV generators, and there are $n_b\in \mathbb{Z}^+$ battery storage units. In total, there are
$n_g=n_b+n_s \in \mathbb{Z}^+$ distributed units connected in the MMS, exclusive of the load units.
The discrete-time model for the battery of the $i$th subsystem is given by the following:
\begin{equation}\label{eq:1.1}
  x_i(k+1)=x_i(k)-T_sP_{b,i}(k),
\end{equation}
where $i\in \{1,\cdots, n_b\}, n_b\leq n_g$, $x_i(k)$ is the absolute state of charge (SoC) of the $i$th battery at time step $k, k\in \mathbb{N}$, $P_{b,i}(k)$ is the battery charging/discharging power, and $T_s$ denotes the sampling time in hours ($h$). Note that the target system is expressed in a per-unit system, which means some elements in the system are represented by per-unit values. Thus, the power quantities are labeled with the symbol $pu$, while the absolute SoC is labeled with the symbol $puh$.

Thus, the global system model can be described as follows:
\begin{equation}\label{eq:1.2}
  \left[
  \begin{matrix}
    x_1(k+1) \\
    x_2(k+1) \\
    \vdots   \\
    x_{n_g}(k+1)
  \end{matrix}\right]=\left[\begin{matrix}
                              x_1(k) \\
                              x_2(k) \\
                              \vdots \\
                              x_{n_g}(k)
                            \end{matrix}\right]
                            -T_s\left[\begin{matrix}
                                                     P_{b,1}(k) \\
                                                     P_{b,2}(k) \\
                                                     \vdots \\
                                                     P_{b,n_g}(k)
                                                   \end{matrix}\right],
\end{equation}

Further, model (\ref{eq:1.2}) can be rewritten into a compact form as follows:
\begin{equation}\label{eq:1.3}
  x(k+1)=x(k)-T_sP_b(k),
\end{equation}
where $x(k)\in\mathbb{R}^{n_b}$, $P_b(k)\in\mathbb{R}^{n_b}$.

\subsection{Power outputs of battery storages and PV generators}

For the battery storages, the charging/discharging power is defined as follows:
\begin{subequations}\label{eq:1.5}
\begin{gather}
\hat u_b(k)=(I-\textnormal{diag}(\mathcal{A}_b(k)))u_b^d(k)+\textnormal{diag}(\mathcal{A}_b(k))u_b(k),\\
 P_b(k)=\textnormal{diag}(\mathcal{G}_b(k))\hat u_b(k) \label{eq:1.5b},
\end{gather}
\end{subequations}
where all the power values are written in a compact form,
$u_b^d(k)\in \mathbb{R}^{n_b}$ is the default charging/discharging power of the batteries, $u_b(k)\in \mathbb{R}^{n_b}$ is the power value to be determined, and $\hat u_b(k)\in \mathbb{R}^{n_b}$ is the set-point power value provided to the batteries. $\mathcal{A}_b(k)\in \mathbb{R}^{n_b}$ is a vector that represents the communication status between the batteries and the central controller with all the elements being Boolean, where $\mathcal{A}_{b,i}(k)=1, i\in\{1,\cdots, n_b\}$ indicates normal communication and $\mathcal{A}_{b,i}(k)=0$ implies a communication failure in the $i$th battery. $\mathcal{G}_b(k)\in \mathbb{R}^{n_b}$ is a vector that implies the electrical connection status of the batteries in the MMS. All the elements of $\mathcal{G}_b(k)$ are also Boolean. If $\mathcal{G}_{b,i}(k)=1$, then the $i$th battery has no fault; otherwise, the $i$th battery is nonfunctional or disconnected from the grid.

For the PV generators, the actual power output is obtained as follows:
\begin{subequations}\label{eq:1.4}
\begin{gather}
 \hat u_s(k)=(I-\textnormal{diag}(\mathcal{A}_s(k)))u_s^d(k)+\textnormal{diag}(\mathcal{A}_s(k))u_s(k),\\
   P_s(k)=\textnormal{diag}(\mathcal{G}_s(k))\textnormal{min}(\hat u_s(k), P_s^a(k))\label{eq:1.4b},
 \end{gather}
\end{subequations}
where $u_s^d(k)\in \mathbb{R}^{n_s}$ is the default set-point power value for the PV generators, $u_s(k)\in \mathbb{R}^{n_s}$ is the generation power to be determined, $\hat u_s(k)\in \mathbb{R}^{n_s}$ is the set-point power value sent to the PV generators, $P_s^a(k)\in \mathbb{R}^{n_s}$ is the maximum available output power that the PV generators can provide at time step $k$, and $P_s(k)\in \mathbb{R}^{n_s}$ is the actual power value generated by the PV generators.
$\mathcal{A}_s(k)\in \mathbb{R}^{n_s}$ and $\mathcal{G}_s(k)\in \mathbb{R}^{n_s}$ are vectors that describe the communication and electrical connection status between the PV generators and the central controller, respectively.

In addition, the PV generation power values forecasted at time step $k$ are denoted as $\hat P_s^a(k+n|k), n\in \mathbb{N}$ and the load demand values as $\hat P_l(k+n|k)$.
Usually, it is assumed that there is no deviation between the actual values and the forecasted values for the time step $k$---that is:
\begin{subequations}\label{eq:1.6}
\begin{gather}
  P_s^a(k)=\hat P_s^a(k),\\
  P_l(k)=\hat P_l(k).
\end{gather}
\end{subequations}

\begin{figure}[htbp]
  \centering
  \includegraphics[width=0.3\textwidth]{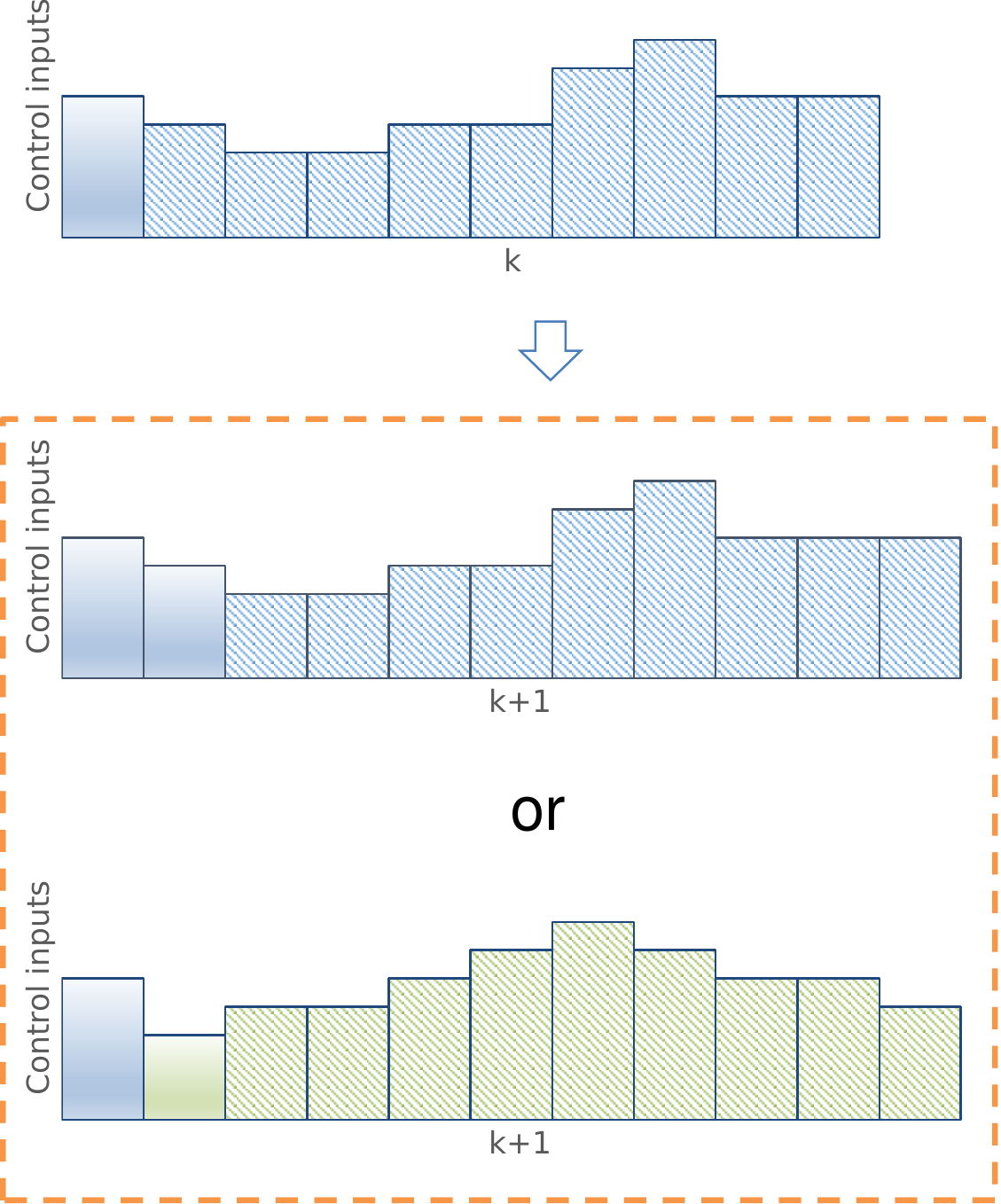}
  \caption{Control input sequence update of local units}\label{fig:2.2}
\end{figure}

\subsection{Communication-based control sequence update}
As illustrated in first picture of Fig.~\ref{fig:2.2}, assuming that there is communication between a local unit and the central controller at time $k$, the local unit can receive a control sequence from the controller.
At the next time step, $k+1$, if the communication is disconnected, the local unit will implement the obtained control sequence from last step, $k$, as the default control, which is shown in the second picture; if there is no communication problem, then the local unit will receive a new control sequence from the central controller and update its future default control inputs, as depicted in the last picture.

With this updated setup, the future default control sequence determined at time step $k$ can be formulated as follows:
\begin{align}\label{eq:1.7}
   u_j^d(k+n|k)=\left\{
   \begin{aligned}
   &u_j(k+n|k-t), n\in\mathbb{Z}_{[0,N-t]}\\
   &u_j(k-t+N|k-t), n\in\mathbb{Z}_{[N-t+1,N]}
   \end{aligned}
   \right.
 \end{align}
where $j\in\{1,\cdots,n_g\}$, $k-t$ is the last time step before the failure occurs, and $u_j(\cdot|k-t)$ is the control sequence computed by solving the MPC optimization problem at time $k-t$.

\section{Ellipsoidal Set-Membership State Estimation}\label{sec:3}

In this section, an SME method is designed to estimate the battery states while also taking into account process disturbances and measurement noises.

The system model (\ref{eq:1.3}) in Section \ref{sec:1} is an ideal model based on the assumption that the system state can be obtained accurately all the time.
However, in a practical situation, this assumption is always arbitrary, as the system state can be inaccurate or even unavailable. This is especially likely when the system is influenced by process disturbances and measurement noises and communication failure occurs.
To describe the system more precisely, the following disturbed system model is considered:
\begin{align}\label{eq:3.1}
  \left\{
  \begin{aligned}
  x(k+1)&=x(k)-T_s(P_b(k)+\omega(k))\\
  y(k)&=\textnormal{diag}(\mathcal{A}_b(k))(x(k)+\upsilon(k))
  \end{aligned}
  \right.,
\end{align}
where $\omega(k)\in\mathbb{R}^{n_b}$ represents the process disturbance that affects the battery state, which can be caused by the charge/discharge dissipation or power loss of the battery, and $\upsilon(k)\in\mathbb{R}^{n_b}$ represents the measurement noise that affects the accuracy of the measurement output.

Further, substituting (\ref{eq:1.5}) into (\ref{eq:3.1}), the following system model is obtained:
\begin{align}\label{eq:3.5}
  \left\{
  \begin{aligned}
  x(k+1)&=x(k)+B(k)u(k)+F(k)\omega(k)+\delta(k)\\
  y(k)&=C(k)x(k)+D(k)\upsilon(k)
  \end{aligned}
  \right.,
\end{align}
where $u(k)=u_b(k)$, $B(k)=-T_s\textnormal{diag}(\mathcal{G}_b(k))\textnormal{diag}(\mathcal{A}_b(k))$, $F(k)=-T_s$, $C(k)=D(k)=\textnormal{diag}(\mathcal{A}_b(k))$, and $\delta(k)=-T_s\textnormal{diag}(\mathcal{G}_b(k))*(I-\textnormal{diag}(\mathcal{A}_b(k)))u_b^d(k)$ is the already known constant related to the default battery output power at every time step.

As in practical systems, disturbances are always bounded in some specific ranges. Therefore, the following Assumption~\ref{as:3.1} is given to define the boundary of the disturbances. Assumption~\ref{as:3.2} defines the boundary of the initial system state, which is essential for the recursive state estimation method.

\begin{assume}\label{as:3.1}
  $\omega(k)$ and $\upsilon(k)$ are unknown but bounded by the following two ellipsoids:
\begin{align}
  \omega(k)\in \varepsilon(\mathbf{0},Q(k))&=\{\omega(k)|{\omega^T(k)} {Q^{-1}(k)}\omega(k)\leq 1\}\nonumber\\&\triangleq W(k),\label{eq:3.2}\\
  \upsilon(k)\in \varepsilon(\mathbf{0},R(k))&=\{\upsilon(k)|{\upsilon^T(k)} {R^{-1}(k)}\upsilon(k)\leq 1\}\nonumber\\&\triangleq V(k).\label{eq:3.3}
\end{align}
where $Q(k)={Q^T(k)}>0$ and $R(k)={R^T(k)}>0$ are known matrices with compatible dimensions. $Q(k)$ and $R(k)$ determine how far the ellipsoids extend in every direction from the origin.
\end{assume}

\begin{assume}\label{as:3.2}
  The initial state is bounded by a given ellipsoid
  \vspace{-0.2cm}
\begin{align}\label{eq:3.4}
  x(0)\in & \ \varepsilon(\hat{x}(0),P(0))\nonumber\\
=&\{x(0)|(x(0)-\hat{x}(0))^TP^{-1}(0)(x(0)-\hat{x}(0))\leq 1\}\nonumber\\
\triangleq& X(0),
\end{align}
where $\hat{x}(0)$ is an estimation of $x(0)$ assumed to be given, and $P(0)=P^T(0)>0$ is a known matrix.
\end{assume}

The system state is estimated through a prediction step and a measurement update step.

Prediction step: use the known state estimate and control input of last step to predict the current state estimate as follows:
\begin{align}\label{eq:3.6}
\left\{
\begin{aligned}
  \hat{x}({k|k-1})=&\hat{x}({k-1|k-1})+B({k-1})u({k-1})\\
  &+\delta(k-1)\\
  \hat{y}({k|k-1})=&C(k)\hat{x}({k|k-1}).
\end{aligned}
\right.
\end{align}
Measurement update step: use the current system output to update the current state estimate as follows:\vspace{-0.1cm}
\begin{align}\label{eq:3.7}
  \hat{x}({k|k})&=\hat{x}({k|k-1})+L({k})(y({k})-\hat{y}({k|k-1})).
\end{align}

By substituting (\ref{eq:3.6}) into (\ref{eq:3.7}), the current state estimate can be obtained as follows:
\vspace{-0.2cm}
\begin{align}\label{eq:3.8}
  \hat{x}({k})=&(I-L(k)C(k))\hat{x}({k-1})+(I-L(k)C(k))B({k-1})\nonumber\\
  &u({k-1})+L(k)y(k)+(I-L(k)C(k))\delta(k-1).
\end{align}
Note that $\hat{x}({k|k})=\hat{x}({k})$ and $\hat{x}({k-1|k-1})=\hat{x}({k-1})$.

Based on the two-step estimation strategy noted above, the following theorem can be obtained to provide sufficient conditions for the existence of the state estimation ellipsoid in which the estimated state resides.
\begin{theorem}\label{th:3.1}
For system (\ref{eq:3.5}), suppose that the state of the last step, $x(k-1)$, lies in its state estimation ellipsoid $\varepsilon(\hat{x}(k-1),P(k-1))$. Then, the state of the current step $x(k)$ resides in its state estimation ellipsoid $\varepsilon(\hat{x}(k),P(k))$, if there exist matrices $P(k)>0$, $L(k)$, and nonnegative scalars $\lambda_1$, $\lambda_2$ and $\lambda_3$ such that
  \begin{align}\label{eq:3.9}
  \left[
  \begin{matrix}
    -\Psi_{\lambda_1,\lambda_2,\lambda_3} & \Phi_{\eta}^T(k) \\
    \Phi_{\eta}(k) & -P(k)
  \end{matrix}
  \right]\leq 0,
  \end{align}
  where
  \vspace{-0.2cm}
  \begin{multline*}
  \Phi_{\eta}(k)=\left[
  \begin{matrix}
    (I-L(k)C(k))E({k-1})
    \end{matrix}\right.\\
    \left.\begin{matrix}
    & (I-L(k)C(k))F(k-1) & -L(k)D(k) & 0
  \end{matrix}
  \right],
  \end{multline*}
  $\Psi_{\lambda_1,\lambda_2,\lambda_3}=\textnormal{diag}(\lambda_1I, \lambda_2Q^{-1}({k-1}),\lambda_3R^{-1}(k), 1-\lambda_1-\lambda_2-\lambda_3)$.
\end{theorem}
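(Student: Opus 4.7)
The plan is to reduce the membership claim $x(k)\in\varepsilon(\hat{x}(k),P(k))$ to a single quadratic inequality in an augmented uncertainty vector, discharge the three bounding ellipsoids via the S-procedure, and recast the resulting quadratic matrix inequality as (\ref{eq:3.9}) through a Schur complement.

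First I would derive the estimation-error dynamics. Substituting $y(k)=C(k)x(k)+D(k)\upsilon(k)$ into (\ref{eq:3.8}) and subtracting from the state equation of (\ref{eq:3.5}) makes the $B(k-1)u(k-1)$, $\delta(k-1)$ and $\hat{x}(k-1)$ contributions cancel in pairs, leaving
\begin{align*}
x(k)-\hat{x}(k)=&\,(I-L(k)C(k))(x(k-1)-\hat{x}(k-1))\\
&+(I-L(k)C(k))F(k-1)\omega(k-1)-L(k)D(k)\upsilon(k).
\end{align*}
I would then factor $P(k-1)=E(k-1)E^T(k-1)$ and parametrize the inductive hypothesis as $x(k-1)-\hat{x}(k-1)=E(k-1)z$ with $z^Tz\leq 1$. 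Stacking $\eta\triangleq[z^T,\omega^T(k-1),\upsilon^T(k),1]^T$, the error rewrites as $x(k)-\hat{x}(k)=\Phi_{\eta}(k)\eta$ for exactly the matrix stated in the theorem; the zero in the last block column reflects the fact that no affine term survives the cancellation above.

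The target conclusion is then $\eta^T\Phi_{\eta}^T(k)P^{-1}(k)\Phi_{\eta}(k)\eta\leq 1$, to be enforced whenever $z^Tz\leq 1$, $\omega^T(k-1)Q^{-1}(k-1)\omega(k-1)\leq 1$ and $\upsilon^T(k)R^{-1}(k)\upsilon(k)\leq 1$. Applying the S-procedure with nonnegative multipliers $\lambda_1,\lambda_2,\lambda_3$, a sufficient condition is $\Phi_{\eta}^T(k)P^{-1}(k)\Phi_{\eta}(k)\leq\Psi_{\lambda_1,\lambda_2,\lambda_3}$, since the three ellipsoid bounds combined with the ``$1-\lambda_1-\lambda_2-\lambda_3$'' slack in the lower-right block of $\Psi$ give $\eta^T\Psi_{\lambda_1,\lambda_2,\lambda_3}\eta\leq 1$. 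A Schur complement using $P(k)>0$ converts this quadratic matrix inequality into the LMI (\ref{eq:3.9}).

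The main obstacle is the bookkeeping in the first step and its interaction with the construction of $\Psi$: I need to confirm that the plant and the observer cancel the input, $\delta(k-1)$ and $\hat{x}(k-1)$ pieces cleanly so that no affine remainder needs to be absorbed, which is precisely what allows the single scalar entry $1-\lambda_1-\lambda_2-\lambda_3$ to represent the ``$1$'' on the right-hand side of the target bound. The S-procedure step is only sufficient rather than tight, but since the theorem is stated as a sufficient condition no conservatism argument is required.
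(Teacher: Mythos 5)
Your proposal is correct and follows essentially the same route as the paper's own proof: error dynamics with the affine terms cancelling, Cholesky parametrization of the prior ellipsoid, stacking into $\eta(k)$ so that $x(k)-\hat{x}(k)=\Phi_{\eta}(k)\eta(k)$, S-procedure with $\lambda_1,\lambda_2,\lambda_3$, and a final Schur complement. The cancellation you flag as the main bookkeeping concern does go through exactly as you describe, which is why the last block column of $\Phi_{\eta}(k)$ is zero and the slack $1-\lambda_1-\lambda_2-\lambda_3$ sits alone in the last diagonal entry of $\Psi_{\lambda_1,\lambda_2,\lambda_3}$.
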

\begin{proof}
  See the Appendix.
\end{proof}

The inequality (\ref{eq:3.9}) provides a sufficient condition to obtain the state estimation ellipsoids recursively.
To minimize the volumes of the ellipsoids, a convex optimization approach is derived to find the minimal ellipsoids---that is, $P(k)$ and $L(k)$ are obtained by solving the following optimization problem:\vspace{-0.2cm}
\begin{gather}\label{op:3.1}
 \min_{P(k),L(k),\lambda_1,\lambda_2,\lambda_3} \textnormal{trace }P(k),\\
 \textnormal{subject to }\quad(\ref{eq:3.9}). \nonumber
\end{gather}

Thereafter, the state estimate of the current step, $\hat{x}(k)$, can be calculated using (\ref{eq:3.8}) with the state estimate of the last step $\hat{x}({k-1})$, control input $u({k-1})$, the output measurement $y(k)$, and the measurement update gain $L(k)$ of the current step computed by (\ref{op:3.1}).

\section{MPC Design}

As the system model is now constructed and the states are estimated, the next step is to calculate the optimal control sequence as inputs that feed back to the system via the MPC technique. Hence, in this section, constraints are given to restrain the signals of different system components, cost functions are designed to emphasize the control objectives by penalizing different terms, and the optimization problem for MPC is derived.

\subsection{Component constraints}
For the MPC formulation, the following constraints are imposed on different components:
\begin{subequations}\label{eq:2.1}
\begin{equation}
   \begin{split}
  \textnormal{Solar components: } P_s^{min}\leq P_s(k+n) \leq P_s^{max},\\
  \textnormal{Battery components: } P_b^{min}\leq P_b(k+n) \leq P_b^{max},\\
  x^{min}\leq x(k+n) \leq x^{max},\\
   k,n\geq 0.\nonumber
   \end{split}
\end{equation}
\end{subequations}
Note that the predicted PV generation power, battery power, and SoC are denoted by $P_s(k+n|k)$, $P_b(k+n|k)$, and $x(k+n|k)$, respectively. For simplicity, these denotations are written as $P_s(k+n)$, $P_b(k+n)$, and $x(k+n)$, respectively.

In addition, a power balance equation must be satisfied.\vspace{-0.2cm}
\begin{multline}\label{eq:2.2}
  {\mathcal{G}_s(k+n)}^TP_s(k+n)+{\mathcal{G}_b(k+n)}^TP_b(k+n)\\
  +P_g(k+n)-{\mathcal{G}_l(k+n)}^TP_l(k+n)=0,
\end{multline}
where $\mathcal{G}_s(k+n)$, and $\mathcal{G}_l(k+n)$ are the electrical connection vectors for PV and load units, respectively. $P_g(k+n)$ is the power supplied to/by the utility grid.

If the microgrid is islanded---that is, the utility grid power $P_g(k+n)=0$---define $P_r(k+n)\in \mathbb{R}$ by $P_r(k+n)={\mathcal{G}_s(k+n)}^TP_s(k+n)-{\mathcal{G}_l(k+n)}^TP_l(k+n)$, then Equation (\ref{eq:2.2}) can be rewritten in the following form:
\begin{equation}\label{eq:2.3}
  \left[
  \begin{matrix}
    {\mathcal{G}_b(k+n)}^T  \\
    -{\mathcal{G}_b(k+n)}^T
  \end{matrix}\right]
  P_b(k+n)\leq \left[
  \begin{matrix}
    -P_r(k+n) \\
    P_r(k+n)
  \end{matrix}\right].
\end{equation}

Note that if any electrical fault occurs in the batteries---that is, $\mathcal{G}_{b,i}(k+n)=0$---the dimension of the above linear matrix inequality can be reduced by removing the $i$th element in the variable $P_b(k+n)$ and by removing $\mathcal{G}_{b,i}(k+n)$ from $\mathcal{G}_b(k+n)$, since the power value of the corresponding faulty battery is $0$ when the electrical fault occurs.

\subsection{Cost functions}
Different cost functions for the grid components are adopted to penalize different control objectives.
For PV generators, the following cost function is used to make the output power as large as possible to reduce energy waste:
\begin{subequations}\label{cos:1}
\begin{equation}\label{cos:1a}
  J_s(k+n)=-C_s^TP_s(k+n),
\end{equation}

For batteries, the following cost function is used to reduce high power operation and make the energy levels stay at a threshold:\vspace{-0.2cm}
\begin{multline}\label{cos:1b}
  J_b(k+n)=P_b^T(k+n)\textnormal{diag}(C_{b,1})P_b(k+n)\\+\Delta x^T(k+n)\textnormal{diag}(C_{b,2})\Delta x(k+n),
\end{multline}
where $\Delta x(k+n)=\textnormal{max}(x_b^{min}-x(k+n),0)+\textnormal{max}(x(k+n)-x_b^{max},0)$.
Note that the second term on the right-hand side of Equation (\ref{cos:1b}) is a soft constraint.

If the utility grid power is considered and not predefined, the following cost function is given to make the local microgrids rely less on the utility grid---that is, to reduce the power drawn from the utility grid or even feed power to the utility grid when there is excess power:
\begin{equation}\label{cos:1c}
  J_g(k+n)=P_g^T(k+n)\textnormal{diag}(C_{g,1})P_g(k+n)+C_{g,2}P_g(k+n),
\end{equation}
\end{subequations}
where $P_g$ is the utility grid power to be determined.

By summing up all these costs, the overall cost function is obtained as follows:
\begin{equation}\label{cos:2}
  J(k+n)=J_s(k+n)+J_b(k+n)+J_g(k+n).
\end{equation}

\subsection{Optimization problem}
Finally, with the constraints and cost functions defined as above, the MPC optimization problem can be achieved as follows:
\begin{gather}\label{op:4.1}
  \min_{u}\quad J_N(k),\\
  \textnormal{subject to }\quad (\ref{eq:1.3}),(\ref{eq:1.4}),(\ref{eq:1.5}),(\ref{eq:2.1}),(\ref{eq:2.3}) \nonumber
\end{gather}
where $J_N(k)=\sum_{n=0}^{N} J(k+n)$ is the sum of the overall cost function $J(k+n|k)$ over the future control horizon $N\in \mathbb{N}$, with $J(k+n|k)=J(k+n)$ given as in (\ref{cos:2}).

By using some optimization solvers, this optimization problem can be solved, and the optimal control sequence can be implemented in the target system.
In this way, the designed networked MPC strategy can address both the electrical faults and the communication failures simultaneously with the constructed system model.

\section{Control Algorithm with Deviation Compensation}
In a practical situation, the actual values of PV power generation and load demand are highly likely to deviate from the forecasted values. This may cause a power mismatch between the generators and loads, which will further affect the global power balance.
A reasonable approach to compensate for this deviation is to reschedule the battery charging/discharging power values.
To this end, a deviation compensation method is proposed to calculate a compensation term, which will be further added to the set-point power values calculated by MPC.
This procedure can be formulated as follows:
\begin{gather}\label{eq:5.1}
  u_b^*(k)=\hat{u}_b(k)+\textnormal{diag}({\mathcal{E}_b(k)})\lambda_b\sigma(k),
\end{gather}
where the second term on the right-hand side of this equation represents the additional values added to the set-point power values of batteries, and $\sigma(k)\in \mathbb{R}$ is a scalar to be determined according to the deviation.
$\mathcal{E}_b(k)$ is the connection vector for the batteries in which the elements indicate the connection status of both the electrical network and the communication network.
This connection vector has the following relationship with the previously defined electrical connection vector and the communication connection vector:
${\mathcal{E}_b(k)}=\mathcal{A}_b(k)\wedge \mathcal{G}_b(k)$.
If there is any electrical fault or communication failure, the corresponding element in the above vector will be set as $0$.
$\lambda_b$ is the coefficient vector for the batteries.

Obviously, the compensated set-point power values of the batteries $u_b^*(k)$ still need to satisfy the following system constraints:
\begin{gather}
P_b^{min}\leq u_b^*(k)\leq P_b^{max}, \label{eq:5.3}\\
   x^{min}\leq x(k)-T_su_b^*(k)\leq x^{max}. \label{eq:5.4}
\end{gather}

To calculate $\sigma(k)$, the following maximization problem is proposed to ensure that the system constraints, as noted above, can always be satisfied by incorporating the additional compensation term:
\begin{gather}
  \max \quad{|\sigma(k)|},\label{op:5.1}\\
\textnormal{subject to}\hspace{40mm} \nonumber\\
(\ref{eq:5.3}),(\ref{eq:5.4}), \textnormal{and} \nonumber\\
  \left\{
  \begin{aligned}
  0\leq\sigma(k)\leq \hat{\sigma}(k) \quad\quad \hat{\sigma}(k)\geq 0\\
  \hat{\sigma}(k)<\sigma(k)\leq 0 \quad\quad \hat{\sigma}(k)\leq 0
  \end{aligned}
  \right.,\nonumber
\end{gather}
where
\begin{gather}
  \hat \sigma(k)=\frac{{\mathcal{E}_s(k)}^T\Delta P_s(k)+{\mathcal{E}_l(k)}^T\Delta P_l}{{\mathcal{E}_b(k)}^T\lambda_b},\label{eq:5.2}\\
  \Delta P_s(k)=\min(\hat{u}_s(k),\hat{P}_s^a(k))-\min(\hat{u}_s(k),P_s^a(k)),\nonumber\\
  \Delta P_l(k)=P_l(k)-\hat{P}_l(k),\nonumber
\end{gather}
$\mathcal{E}_s(k)$ and $\mathcal{E}_l(k)$ are the connection vectors for the PV generators and loads, which have the same definition as $\mathcal{E}_b(k)$.

\begin{figure}[htbp]
  \centering
  \includegraphics[width=0.48\textwidth]{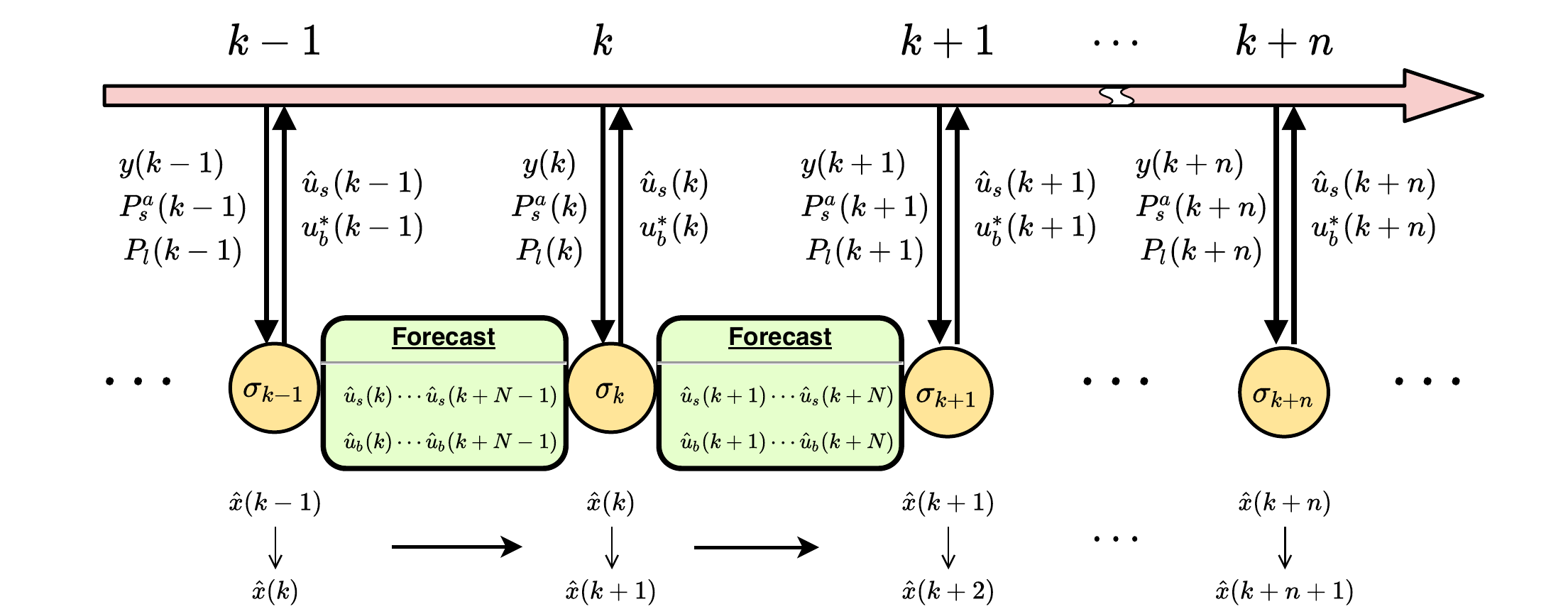}
  \caption{Control mechanism with deviation compensation}\label{fig:4.1}
\end{figure}

\begin{algorithm}
\caption{Recursive Power Dispatch Control Algorithm with Deviation Compensation}
\small
\label{alg:1}

\begin{algorithmic}
\State  \textbf{1. Initialization}\\
\quad   Given an initial input $\hat u_s(0)$ and $u_b^*(0)$, the initial state $\hat x(0)$, the prediction horizon $N$, and set $k=1$ and the predicted inputs for time step $k$ as $\hat{u}_s(k)=\hat{u}_s(0)$ and $\hat{u}_b(k)=u_b^*(0)$.

\\\hrulefill
\State  \textbf{2. Deviation Compensation}

\\1) Obtain the current maximum available power of the PV panels $P_s^a(k)$ and the load demand $P_l(k)$;

\\2) Calculate $\sigma(k)$ by (\ref{op:5.1});

\\3) Calculate $u_b^*(k)$ by (\ref{eq:5.1}) and send $\hat u_s(k)$, $u_b^*(k)$ to PV generators and batteries.

\\\hrulefill
\State  \textbf{3. Estimation}

\\1) Calculate $L(k)$ by solving the minimization problem (\ref{op:3.1})

\\2) Estimate the current state of the batteries $\hat x(k)$ by (\ref{eq:3.8}) with the measurement $y(k)$, $u_b^*(k-1)$, $\hat x(k-1)$ and $L(k)$.

\\\hrulefill
\State  \textbf{4. Prediction}

\\1) Estimate the state of the next step $\hat x(k+1)$ with the obtained $u_b^*(k)$ and the estimated state $\hat x(k)$;

\\2) Obtain the updated forecasts of the future PV generation power $\hat P_s^a(k+1), \cdots, \hat P_s^a(k+N)$ and the load demand $\hat P_l(k+1), \cdots, \hat P_l(k+N)$, and calculate the control input prediction $\hat u_s(k+1) \cdots \hat u_s(k+N)$ and $\hat u_b(k+1) \cdots \hat u_b(k+N)$ by solving the optimization problem~(\ref{op:4.1}).

\\3) Set $u_s^d(k+n)\leftarrow \hat{u}_s(k+n)$, and $u_b^d(k+n)\leftarrow \hat{u}_b(k+n), n=1, \cdots, N$.

\\\hrulefill
\State  \textbf{5. Loop}\\

\quad Set $k\leftarrow k+1$, and go to \textit{Step\ 2}.

\end{algorithmic}
\end{algorithm}

Consequently, as illustrated in Fig. \ref{fig:4.1}, to reduce the computational delay at every time step, a one-step-ahead method is adopted, and the control mechanism with the deviation compensation can be generalized as follows.
At every sampling time step, the central controller collects information from local units. With this information, a compensation term is computed and added to the control inputs predicted at the last time step. Then, the obtained optimal control inputs are implemented in the system.
During the interval between the two sampling steps, the central controller first estimates the current system state by solving the SME optimization problem.
Next, it predicts the future control input sequence by solving the MPC optimization problem and storing the control sequence for future use.
With this procedure, most of the time-consuming computations are executed between the sampling steps; only the deviation compensation must be computed immediately at the sampling step. As a result, the computational delay is effectively relieved.
Moreover, because of the large timescale of the tertiary power flow control, its sampling time is relatively long (usually minutes to hours), which allows plenty of time for computing estimation and prediction.

The corresponding control algorithm is designed as in Algorithm~\ref{alg:1}.

\section{Simulation Results}
In simulation tests, the Raspberry Pi $4$ Model B is used as the local hub to receive and send data wirelessly, and a desktop PC is used to execute all the computations, with MATLAB R2016b implemented as the central controller, as illustrated in Fig. \ref{fig:5.10}. The SME optimization problem is solved with MATLAB YALMIP \cite{lofberg2004cacsd} toolbox and SeDuMi \cite{jos1999oms} solver, while the MPC optimization problem is solved with Gurobi \cite{gurobi2016} solver.

\begin{figure}[htbp]
  \centering
  \includegraphics[width=0.4\textwidth]{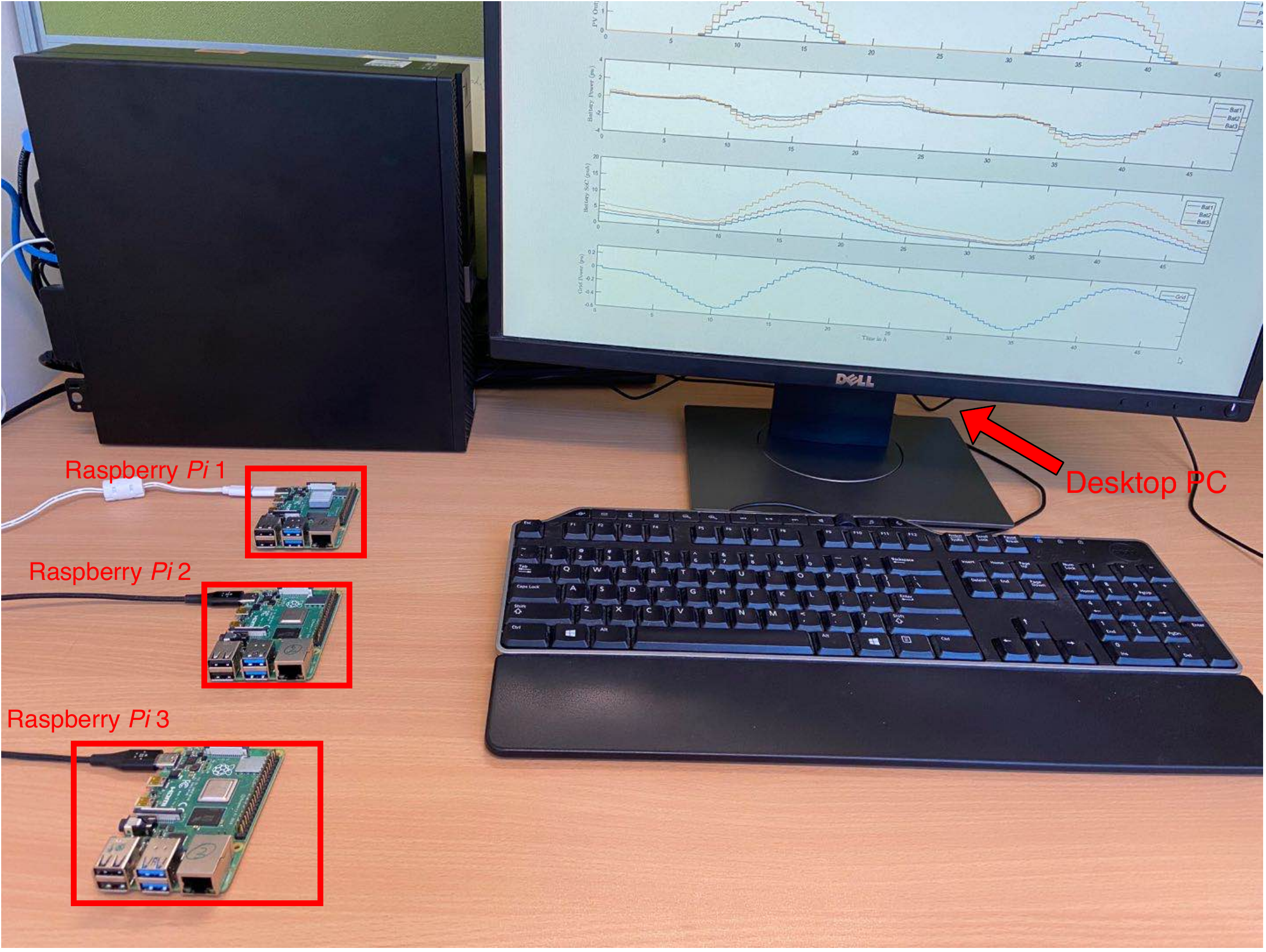}
  \caption{Raspberry $Pi$s}\label{fig:5.10}
\end{figure}

\begin{figure}[htbp]
  \centering
  \includegraphics[width=0.38\textwidth]{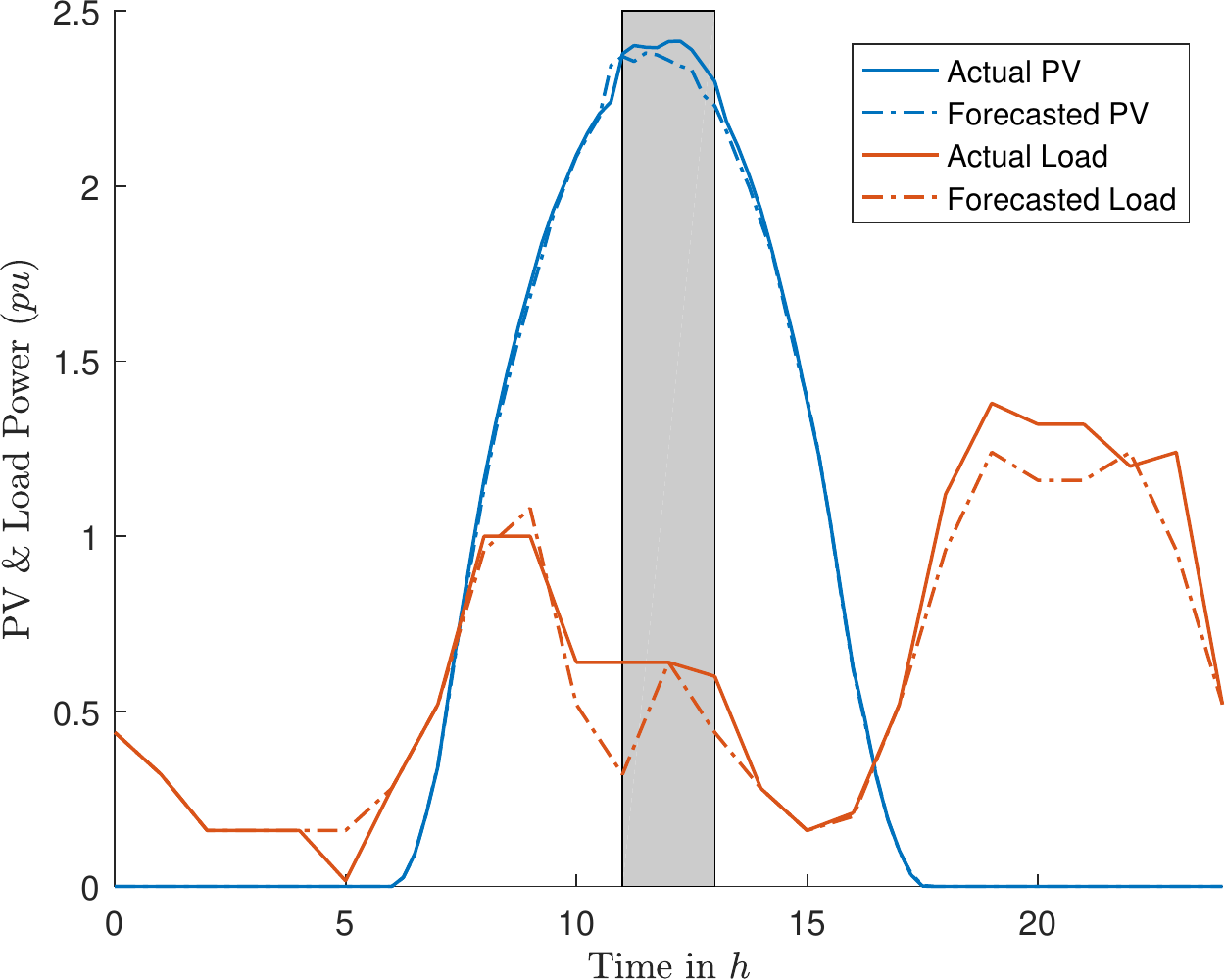}
  \caption{PV generation \& load demand profiles}\label{fig:5.1}
\end{figure}

The PV generation profile was collected from the UQ Centre at The University of Queensland on September 1, 2019~\cite{uqpv}. The load profile is the power consumption data for a typical residential household.
As shown in Fig.~\ref{fig:5.1}, the forecasts for both PV generation and load demand are displayed with dashed lines, while the evolutions of the actual PV generation power and load power are displayed with solid lines.

To simplify the data composition, it is assumed that the three PV generators have the same evolution pattern, as shown in Fig. \ref{fig:5.1}, with the only difference being in the value magnitudes, which are proportional to one another at every time step; the three residential loads also evolve proportionally in the same pattern. Specifically, in the test setting, the power values of all the PV and load units have the following proportional relationships: $P_{pv1}=0.5P_{pv}$, $P_{pv2}=P_{pv}$, $P_{pv3}=1.5P_{pv}$, and $P_{load1}=0.5P_{load}$, $P_{load2}=P_{load}$, $P_{load3}=1.8P_{load}$, where $P_{pv}$ and $P_{load}$ are the default reference values corresponding to Fig.~\ref{fig:5.1}.
The electrical fault or communication failure is supposed to occur during the time $11h$ to $13h$, as marked in the shadow box in the figure.
The detailed parameter values are given in Table \ref{tab:3.2}.
\begin{table}[!t]
	\renewcommand{\arraystretch}{1.3}
	\caption{Unit parameters and weights of the microgrids}\label{tab:3.2}
	\centering
	\resizebox{0.95\columnwidth}{!}{
	\begin{tabular}{l l l}
\hline\hline \\[-3mm]
  \multicolumn{1}{c}{Parameter} & \multicolumn{1}{c}{Description} & \multicolumn{1}{c}{Value}
  \\[1.6ex] \hline
$[P_s^{min} \quad P_s^{max}]$ & PV generation limits  & $\left[\begin{matrix}
                                                               0 & 1.5 \\
                                                               0 & 3 \\
                                                               0 & 4.5
                                                             \end{matrix}\right]pu$ \\
\hline
$[P_b^{min} \quad P_b^{max}]$ & Battery power limits  & $\left[\begin{matrix}
                                                               -3 & 3 \\
                                                               -4 & 4 \\
                                                               -6 & 6
                                                             \end{matrix}\right]pu$ \\
\hline
$[x^{min} \quad x^{max}]$ & Battery SoC limits  & $\left[\begin{matrix}
                                                               0 & 12 \\
                                                               0 & 16 \\
                                                               0 & 24
                                                             \end{matrix}\right]puh$\\
\hline
$[x_b^{min} \quad x_b^{max}]$ & Battery SoC thresholds  & $\left[\begin{matrix}
                                                               0.2 & 11.8 \\
                                                               0.3 & 15.7 \\
                                                               0.3 & 23.7
                                                             \end{matrix}\right]puh$ \\
\hline
$x(0)$ & Battery initial state & $\left[\begin{matrix}
                                          3 & 4 & 6
                                        \end{matrix}\right]^T puh$ \\
\hline
$C_s$ & Weights for solar units & $\left[\begin{matrix}
                                          1 & 1 & 1
                                        \end{matrix}\right]^T 1/{pu}^2$ \\
\hline
$C_{b,1}$ & Weights for battery power & $\left[\begin{matrix}
                                          0.2 & 0.15 & 0.1
                                        \end{matrix}\right]^T 1/{pu}^2$\\
\hline
$C_{b,2}$ & Weights for battery SoC & $\left[\begin{matrix}
                                          0.3 & 0.3 & 0.3
                                        \end{matrix}\right]^T 1/{puh}^2$\\
\hline
$C_{g,1}$ & Weight 1 for power grid  & $0.5 \quad 1/{pu}^2$ \\
\hline
$C_{g,2}$ & Weight 2 for power grid & $0.1 \quad 1/{pu}$ \\
\hline
$\lambda_b$ & Coefficient vector & $\left[\begin{matrix}
                                          1 & 1 & 1
                                        \end{matrix}\right]^T$ \\
\hline\hline
\end{tabular}
}
\end{table}

\begin{figure}[htbp]
  \centering
  \includegraphics[width=0.48\textwidth]{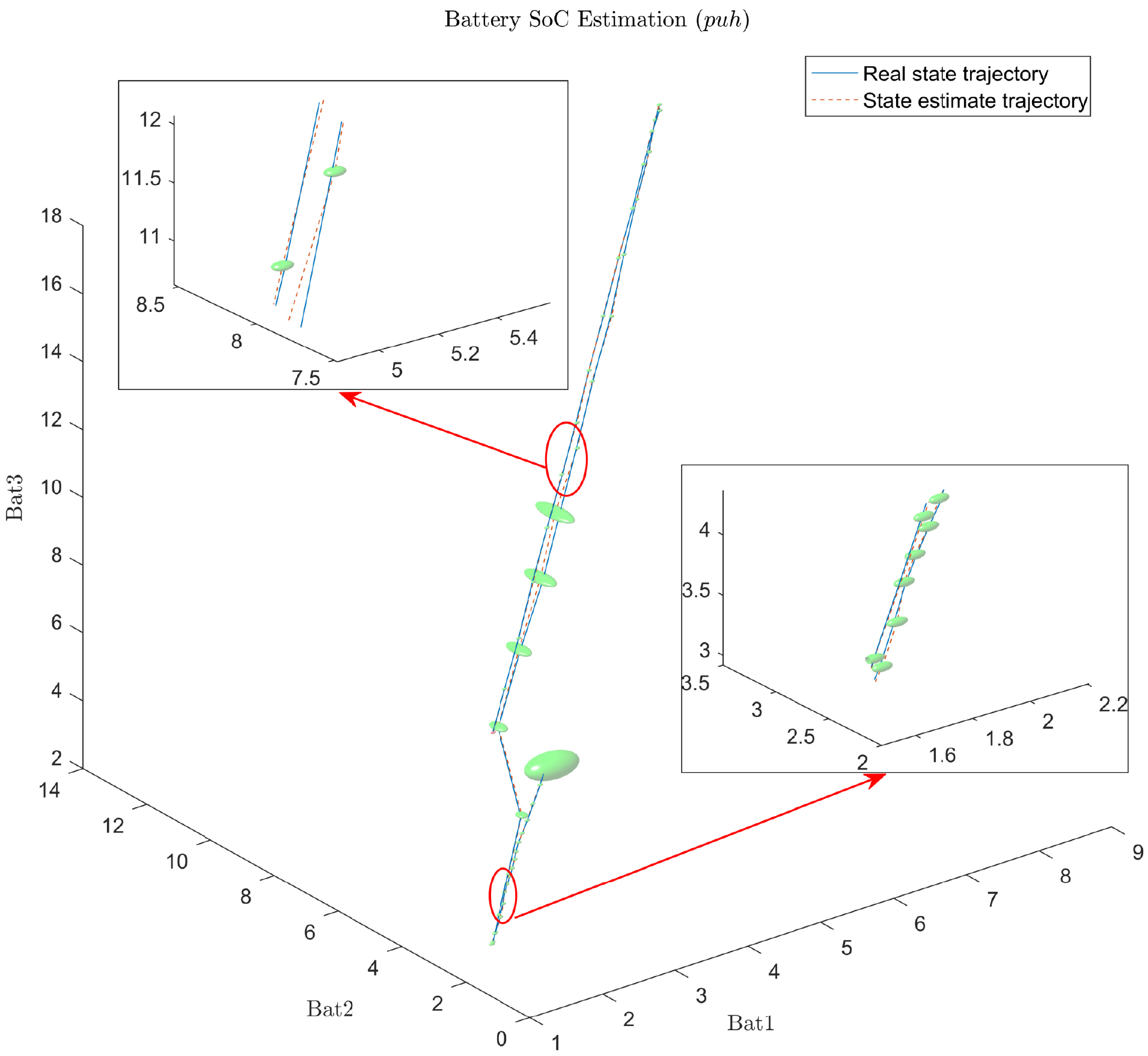}
  \caption{State trajectories and estimation ellipsoids}\label{fig:5.7}
\end{figure}

%
%

\begin{figure}[htbp]
  \centering
  \includegraphics[width=0.48\textwidth]{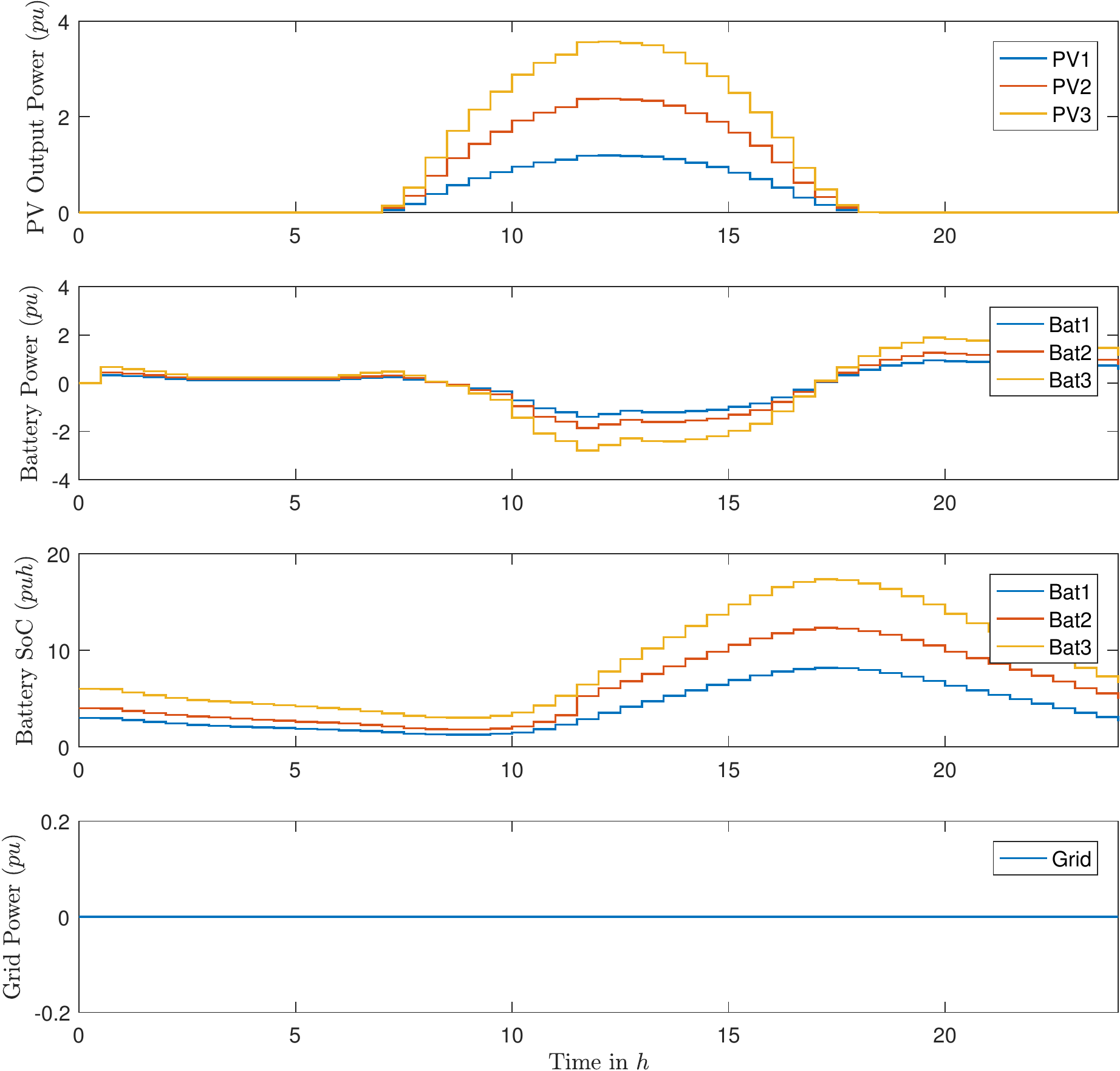}
  \caption{System responses in islanded mode}\label{fig:5.9}
\end{figure}

\vspace{1mm}
\textbf{Case 1 } \textit{Ellipsoidal state estimation test in islanded mode}

As in the setting, communication failure occurs in Battery 2 from time $11h$ to $13h$, which means the communication vector $\mathcal{A}_b(k)$ switches from
$\left[\begin{matrix}
 1 & 1 & 1
 \end{matrix}\right]^T$ to
 $\left[\begin{matrix}
  1 & 0 & 1
  \end{matrix}\right]^T$ during this period. In this scenario, the state of Battery 2 is unavailable during the communication failure and must be estimated using the designed SME method.

It is assumed that the forecasts for both the PV generation power and load power are accurate in this case and that the process disturbance and measurement noise satisfy the following bounding conditions: $\omega(k)\in W(k)=\{\omega(k)\in\mathbb{R}:||\omega(k)||_{\infty}\leq0.1\ pu\}$ and $\upsilon(k)\in V(k)=\{\upsilon(k)\in\mathbb{R}:||\upsilon(k)||_{\infty}\leq0.02\ puh\}$. The initial conditions are given as $x(0)=\left[\begin{matrix}
                                            3 & 4 & 6
                                          \end{matrix}\right]^T puh$,
$\hat x(0)=\left[\begin{matrix}
3.1 & 4.1 & 5.8
\end{matrix}\right]^T puh$, and  $u(0)=\left[\begin{matrix}
0 & 0 & 0
\end{matrix}\right]^T$. Hence, with conservative ellipsoidal approximations, the unknown disturbance and noise are bounded with $Q=\left[\begin{matrix}
                                                 0.03 & 0 & 0 \\
                                                 0 & 0.03 & 0 \\
                                                 0 & 0 & 0.03
                                               \end{matrix}\right]$ and $R=\left[\begin{matrix}
                                                 0.0012 & 0 & 0 \\
                                                 0 & 0.0012 & 0 \\
                                                 0 & 0 & 0.0012
                                               \end{matrix}\right]$, and the initial system state estimation error is bounded with $P(0)=\left[\begin{matrix}
                                                 0.12 & 0 & 0 \\
                                                 0 & 0.12 & 0 \\
                                                 0 & 0 & 0.12
                                               \end{matrix}\right]$.

The test results are shown in Fig.~\ref{fig:5.7} and~\ref{fig:5.9}. From Fig.~\ref{fig:5.7}, it can be observed that the state estimates are the centers of the ellipsoids and the real states reside in the ellipsoids all the time. In addition, the estimation ellipsoids converge in small volumes except for those time steps when communication failures occur. This is because the state of Battery 2 is unavailable during this period, and the corresponding element in the measurement output is $0$, as in (\ref{eq:3.1}). Hence, the state estimation of the current step is obtained without the exact measurement update from Battery 2, as in (\ref{eq:3.7}). All the batteries work properly to store and supply power such that the global power balance is satisfied, as in Fig.~\ref{fig:5.9}.

\begin{figure}[htbp]
  \centering
  \includegraphics[width=0.48\textwidth]{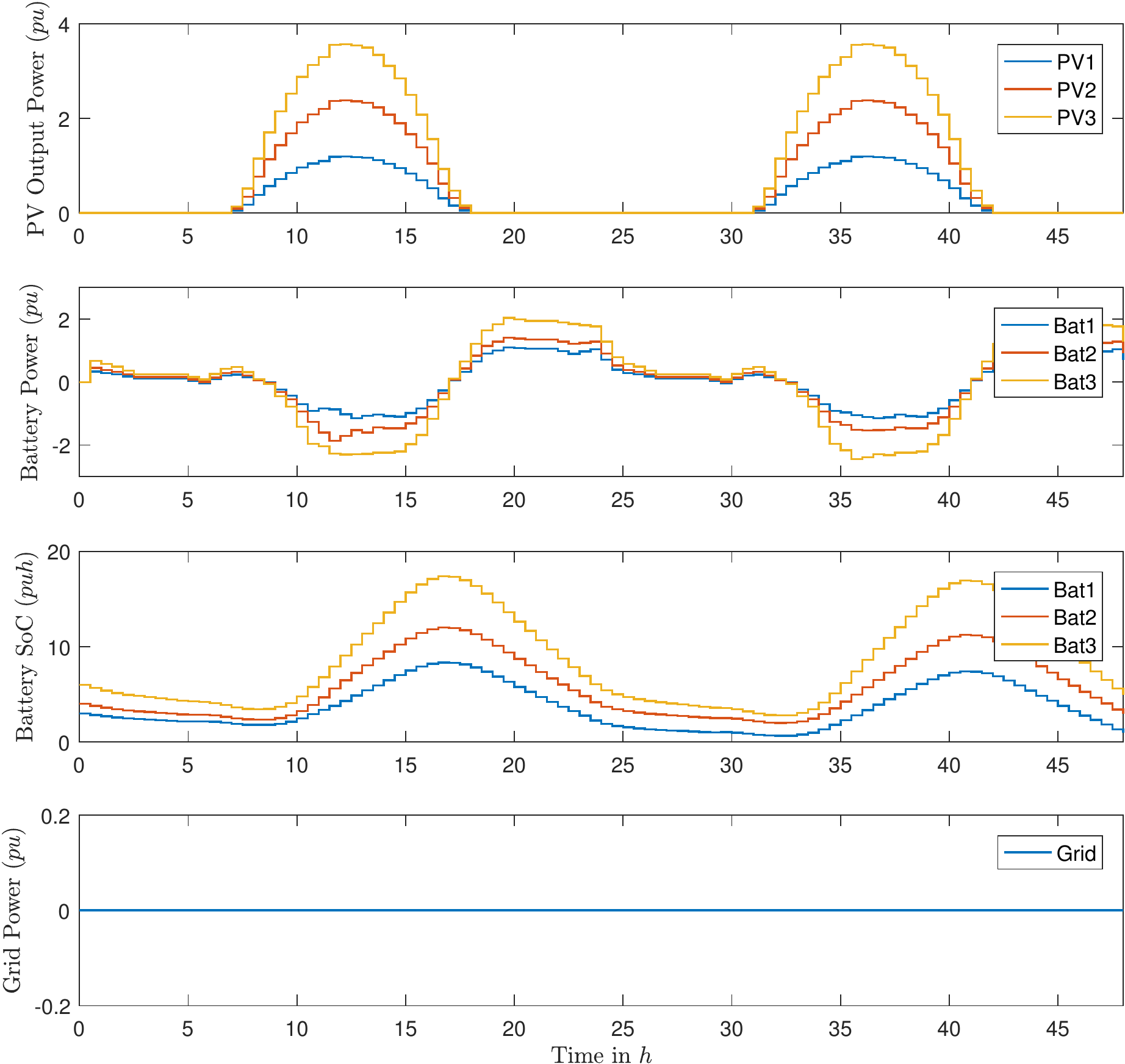}
  \caption{System responses with deviation compensation}\label{fig:5.4}
\end{figure}

\begin{figure}[htbp]
  \centering
  \includegraphics[width=0.48\textwidth]{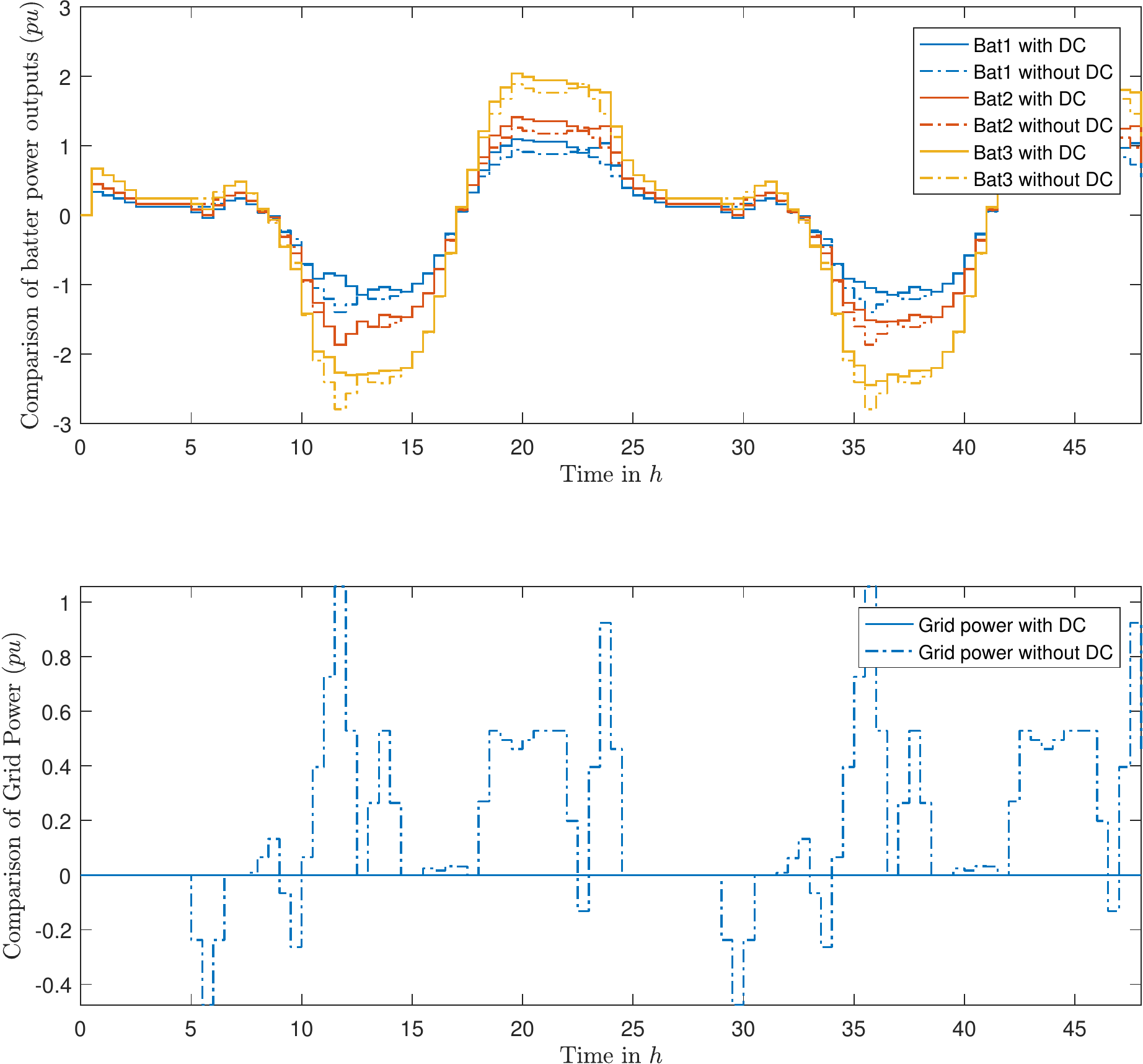}
  \caption{Power values of batteries and utility grid with and without deviation compensation}\label{fig:5.5}
\end{figure}

\vspace{1mm}
\textbf{Case 2 } \textit{Deviation compensation test}

In this case, the designed deviation compensation algorithm is implemented and compared with the scenario where there is no deviation compensation. For simplicity, the utility grid power is expected to be $0$, which is equivalent to the islanded mode. As shown in the system responses in Fig.~\ref{fig:5.4}, the designed controller can still regulate the batteries properly even when there are forecast errors. In addition, from Fig.~\ref{fig:5.5}, it can be seen that the grid power will experience severe fluctuations if there is no compensation procedure, because the power deviation in the system must be compensated by the grid.
By contrast, if the proposed compensation method is used, this type of deviation can be compensated by the batteries such that the utility grid power will be smoothed as the predefined constant.

\begin{figure}[htbp]
  \centering
  \includegraphics[width=0.48\textwidth]{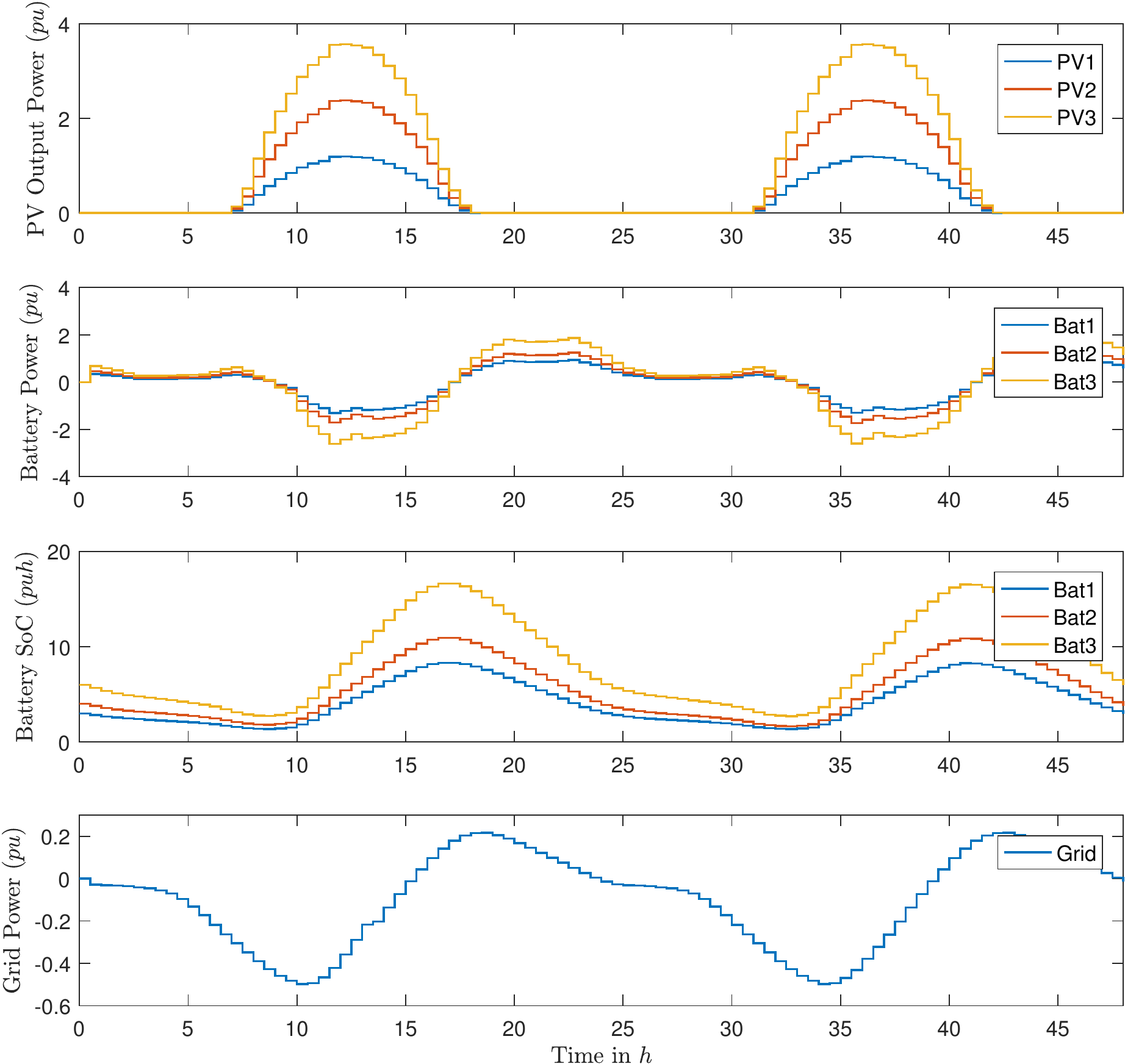}
  \caption{System responses with variable grid power}\label{fig:5.2}
\end{figure}

\begin{figure}[htbp]
  \centering
  \includegraphics[width=0.48\textwidth]{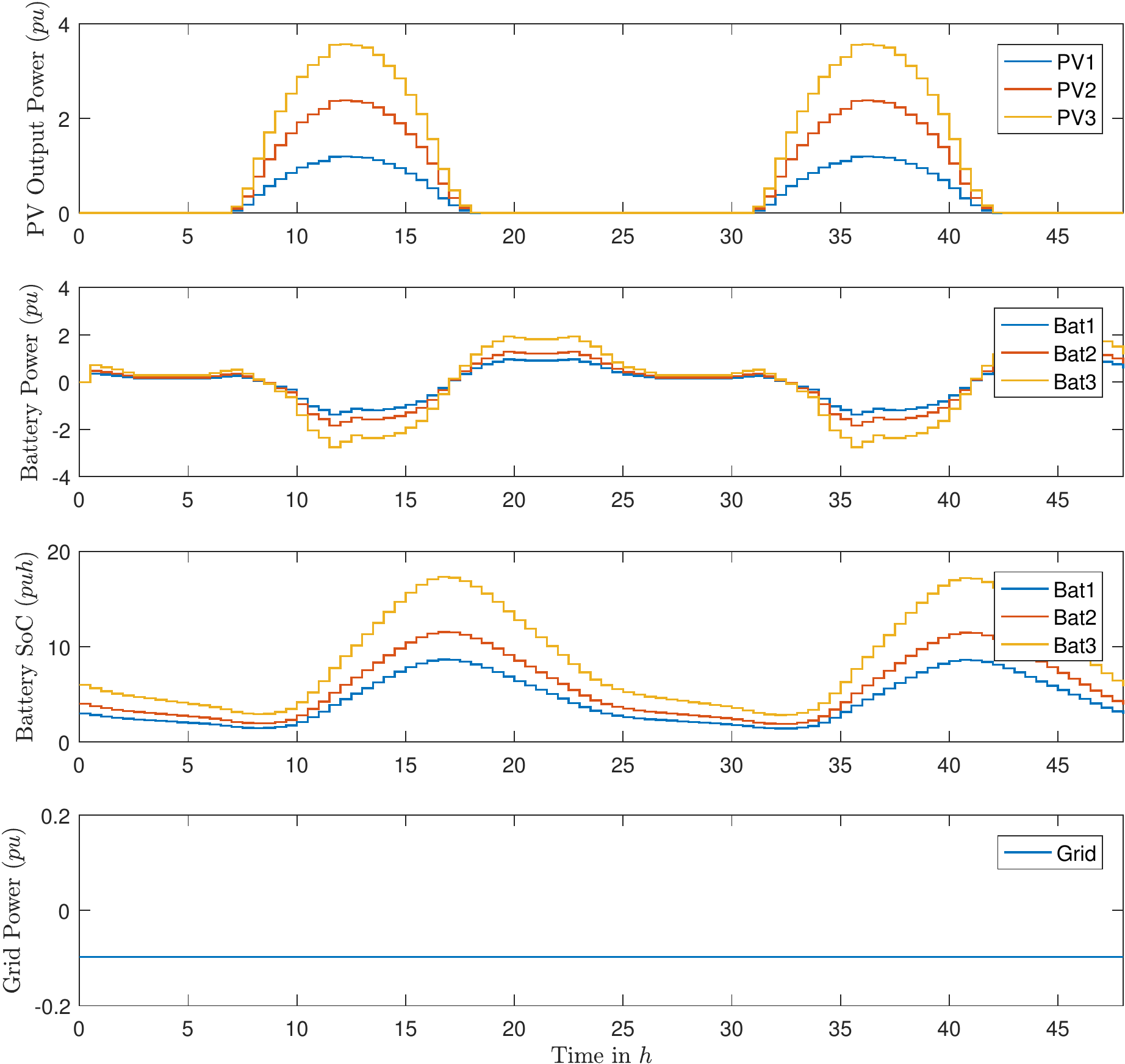}
  \caption{System responses with constant grid power}\label{fig:5.3}
\end{figure}

\vspace{1mm}
\textbf{Case 3 } \textit{MPD in grid-connected mode with communication failure}

In this case, the MMS is set to work in grid-connected mode, and the designed MPD strategy is implemented in the system while still considering the occurrence of communication failure in Battery 2.

First, the scenario wherein the power of the utility grid is allowed to be variable is tested, and the simulation results are illustrated in Fig.~\ref{fig:5.2}.
The results show that the controller can still regulate the PV and battery power properly. However, the power fed into or supplied by the grid fluctuates as the generation power of the solar panels changes. This is because the global system power balance must be satisfied at all times.

Then, the power fed into the utility grid is predefined as a constant of $-0.0979 pu$. This constant is obtained by computing the average value of the variable power profile of the utility grid from the last test.
As shown in Fig.~\ref{fig:5.3}, with a properly predefined power value for the utility grid, the charging/discharging power of the batteries can still be regulated effectively.

\begin{figure}[htbp]
  \centering
  \includegraphics[width=0.48\textwidth]{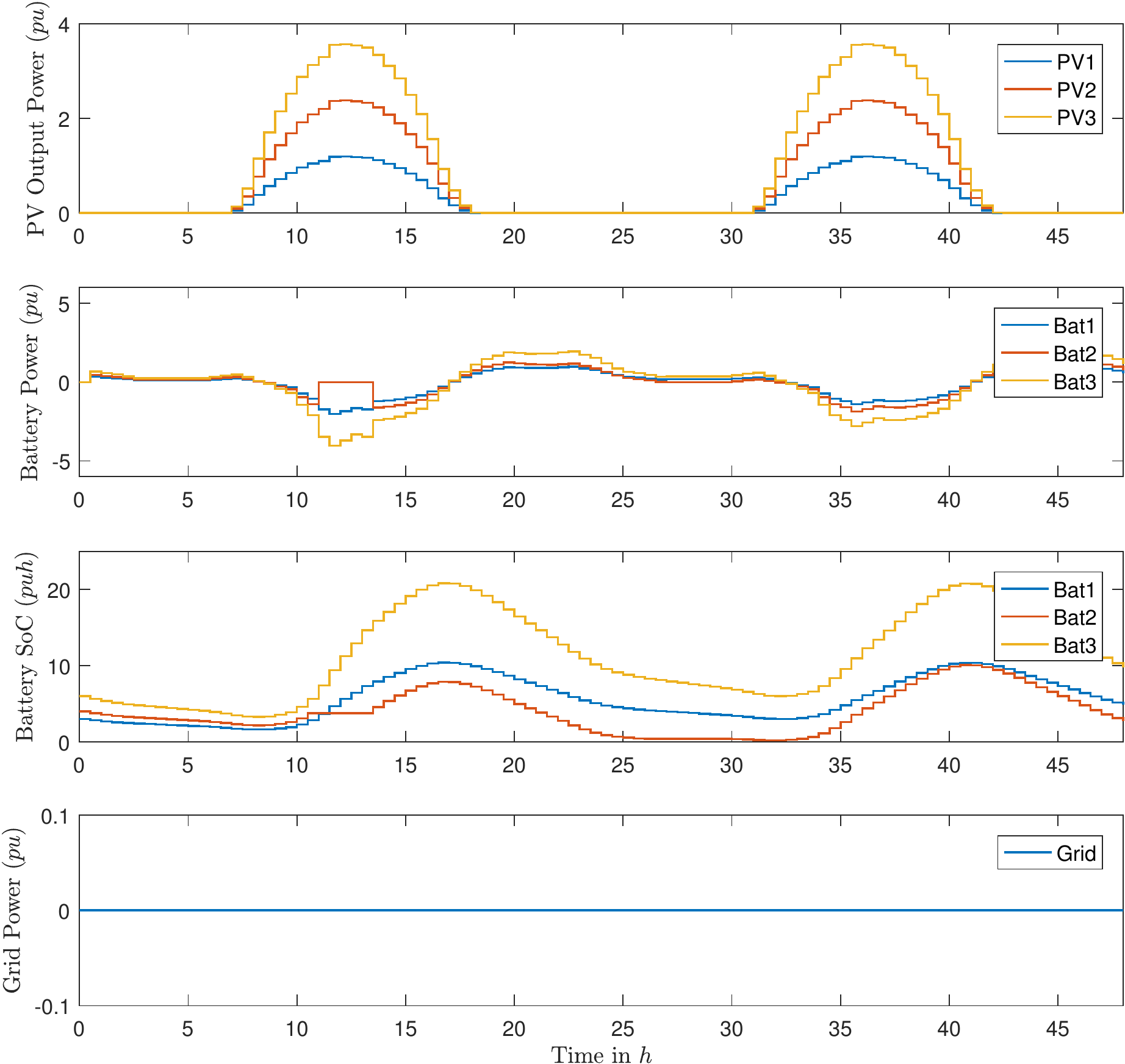}
  \caption{System responses with electrical disconnection}\label{fig:5.6}
\end{figure}

\vspace{1mm}
\textbf{Case 4 } \textit{MPD in islanded mode with electrical fault}

For further demonstration, an electrical fault is considered in the MMS between the times $11h$ and $13h$ at Battery 2. To verify that the designed MPD strategy can effectively manage the power flow within the local system, the MMS is set to work in islanded mode to exclude the influence from the external utility grid.
As shown in Fig.~\ref{fig:5.6}, the power of Battery 2 becomes $0$ and its SoC keeps invariant during the period in which the electrical fault occurs. However, the whole system can still work properly, as the other two batteries can regulate their own power values to satisfy the power balance requirement, even though Battery 2 is disconnected. This demonstrates that the proposed MPD strategy works effectively for the situation with electrical faults.

\section{Conclusion}

Based on the simulation test results, it can be concluded that the proposed control strategy can address the power dispatch problem for faulty microgrids in both islanded mode and grid-connected mode, and that the designed deviation compensation method can effectively compensate for forecast errors by rescheduling the battery charge/discharge.
Future research should focus on implementing the designed wireless communication structure in a prototype microgrid system and using Raspberry $Pi$s to provide set-point control signals and collect information for the central energy management operator.

\appendix

\emph{Proof of Theorem~\ref{th:3.1}:}

  With the system state $x(k)$ in (\ref{eq:3.5}) and the state estimate $\hat{x}(k)$ in (\ref{eq:3.8}), the state error can be derived as
  \vspace{-0.2cm}
  \begin{align}\label{eq:3.10}
    &x(k)-\hat{x}(k)\nonumber\\
    =&x(k-1)+B(k-1)u(k-1)+F(k-1)\omega(k-1)\nonumber\\
    &+\delta(k-1)-[(I-L(k)C(k))\hat{x}({k-1})\nonumber\\
    &+(I-L(k)C(k))B({k-1})u({k-1})+L(k)y(k)\nonumber\\
    &+(I-L(k)C(k))\delta(k-1)]
  \end{align}

  By substituting $y(k)=C(k)x(k)+D(k)\upsilon(k)$ into (\ref{eq:3.10}), the state error can be rewritten as
  \vspace{-0.2cm}
  \begin{align}\label{eq:3.11}
    &x(k)-\hat{x}(k)\nonumber\\
    =&(I-L(k)C(k))(x(k-1)-\hat{x}(k-1))+(I-L(k)C(k))\nonumber\\
     &F(k-1)\omega(k-1)-L(k)D(k)\upsilon(k).
  \end{align}

  Considering the assumption that $x(k-1)\in \varepsilon (\hat{x}(k-1), P(k-1))$, which implies that\vspace{-0.2cm}
  \begin{multline}\label{eq:3.12}
    (x({k-1})-\hat{x}({k-1}))^TP^{-1}({k-1})(x({k-1})-\hat{x}({k-1}))\\
    \leq 1,
  \end{multline}
then there exists a $z({k-1})$ with $||z({k-1})||\leq1$ such that
\begin{align}\label{eq:3.13}
  x({k-1})=\hat{x}({k-1})+E({k-1})z({k-1}),
\end{align}
where $E({k-1})$ is the Cholesky factorization of $P({k-1})$, i.e.,
$P({k-1})=E({k-1})E^T({k-1})$.

Hence, by substituting (\ref{eq:3.13}) into (\ref{eq:3.11}), then the following equation is obtained\vspace{-0.2cm}
  \begin{align}\label{eq:3.14}
    &x(k)-\hat{x}(k)\nonumber\\
    =&(I-L(k)C(k))E({k-1})z({k-1})+(I-L(k)C(k))\nonumber\\
     &F(k-1)\omega(k-1)-L(k)D(k)\upsilon(k).
  \end{align}

Denote $\eta({k})=[z({k-1})^T\ \omega({k-1})^T\ \upsilon(k)^T\ 1]^T$, then (\ref{eq:3.14}) can be rewritten in a compact form\vspace{-0.2cm}
\begin{align}\label{eq:3.15}
  x(k)-\hat{x}(k)=\Phi_{\eta}(k)\eta({k}).
\end{align}

With the equation (\ref{eq:3.15}), the current state estimation ellipsoid $\varepsilon(\hat{x}(k),P(k))$ can be written as
\begin{align}\label{eq:3.16}
  \eta({k})^T[\Phi_{\eta}(k)^TP(k)^{-1}\Phi_{\eta}(k)-\textnormal{diag}(0,0,0,1)]\eta({k})\leq0.
\end{align}

Since the unknown variables $z({k-1})$, $\omega({k-1})$, $\upsilon(k)$ within $\eta(k)$ satisfy the following conditions\vspace{-0.2cm}
\begin{align*}
  \left\{
  \begin{aligned}
  &||z({k-1})||\leq1\\
  &\omega({k-1})^TQ({k-1})^{-1}\omega({k-1})\leq1\\
  &\upsilon(k)^TR(k)^{-1}\upsilon(k)\leq1
  \end{aligned}
  \right.,
\end{align*}
which can be also expressed in the form of $\eta(k)$ as
\begin{align}\label{eq:3.17}
  \left\{
  \begin{aligned}
  &\eta^T(k)\textnormal{diag}(I,0,0,-1)\eta(k)\leq0\\
  &\eta^T(k)\textnormal{diag}(0,Q({k-1})^{-1},0,-1)\eta(k)\leq0\\
  &\eta^T(k)\textnormal{diag}(0,0,R(k)^{-1},-1)\eta(k)\leq0
  \end{aligned}
  \right.,
\end{align}
by applying S-procedure to (\ref{eq:3.16}) and (\ref{eq:3.17}), then the inequality (\ref{eq:3.16}) can hold if there exist nonnegative scalars $\lambda_1$, $\lambda_2$ and $\lambda_3$ such that\vspace{-0.2cm}
\begin{align}\label{eq:3.18}
 &\eta^T(k)(\Phi_{\eta}(k)^TP(k)^{-1}\Phi_{
  \eta}(k)-\textnormal{diag}(0,0,0,1)
  -\lambda_1\textnormal{diag}(I,\nonumber\\&0,0,-1)
  -\lambda_2\textnormal{diag}(0,Q({k-1})^{-1},0,-1)
  -\lambda_3\textnormal{diag}(0,0,\nonumber\\&R(k)^{-1},-1))\eta(k)\leq0.
\end{align}

By denoting\\
$\Psi_{\lambda_1,\lambda_2,\lambda_3}=\textnormal{diag}(\lambda_1I, \lambda_2Q^{-1}({k-1}),\lambda_3R^{-1}(k), 1-\lambda_1-\lambda_2-\lambda_3)$,\\
the inequality (\ref{eq:3.18}) can be rewritten as
\begin{align}\label{eq:3.19}
  \eta^T(k)(\Phi_{\eta}(k)^TP(k)^{-1}\Phi_{\eta}(k)-\Psi_{\lambda_1,\lambda_2,\lambda_3})\eta(k)\leq0.
\end{align}

Through applying Schur complements, the inequality (\ref{eq:3.19}) is equivalent to (\ref{eq:3.9}).

Thereby, the proof is completed.

\bibliographystyle{IEEEtran}
\bibliography{D:/Document/Latex/bib/abrv,D:/Document/Latex/bib/conf,D:/Document/Latex/bib/thesis,D:/Document/Latex/bib/article}

\begin{thebibliography}{10}
\providecommand{\url}[1]{#1}
\csname url@samestyle\endcsname
\providecommand{\newblock}{\relax}
\providecommand{\bibinfo}[2]{#2}
\providecommand{\BIBentrySTDinterwordspacing}{\spaceskip=0pt\relax}
\providecommand{\BIBentryALTinterwordstretchfactor}{4}
\providecommand{\BIBentryALTinterwordspacing}{\spaceskip=\fontdimen2\font plus
\BIBentryALTinterwordstretchfactor\fontdimen3\font minus
  \fontdimen4\font\relax}
\providecommand{\BIBforeignlanguage}[2]{{%
\expandafter\ifx\csname l@#1\endcsname\relax
\typeout{** WARNING: IEEEtran.bst: No hyphenation pattern has been}%
\typeout{** loaded for the language `#1'. Using the pattern for}%
\typeout{** the default language instead.}%
\else
\language=\csname l@#1\endcsname
\fi
#2}}
\providecommand{\BIBdecl}{\relax}
\BIBdecl

\bibitem{olivares2014tsg}
D.~E. {Olivares}, A.~{Mehrizi-Sani}, A.~H. {Etemadi}, C.~A. {Ca\~{n}izares},
  R.~{Iravani}, M.~{Kazerani}, A.~H. {Hajimiragha}, O.~{Gomis-Bellmunt},
  M.~{Saeedifard}, R.~{Palma-Behnke}, G.~A. {Jim\'{e}nez-Est\'{e}vez}, and
  N.~D. {Hatziargyriou}, ``Trends in microgrid control,'' \emph{IEEE Trans.
  Smart Grid}, vol.~5, no.~4, pp. 1905--1919, Jul. 2014.

\bibitem{lasseter2001pesw}
B.~{Lasseter}, ``Microgrids [distributed power generation],'' in \emph{Proc.
  IEEE Power Eng. Soc. Wint. Meeting}, vol.~1, Columbus, USA, Jan. 2001, pp.
  146--149.

\bibitem{teleke2009tec}
S.~{Teleke}, M.~E. {Baran}, A.~Q. {Huang}, S.~{Bhattacharya}, and
  L.~{Anderson}, ``Control strategies for battery energy storage for wind farm
  dispatching,'' \emph{IEEE Trans. Energy Convers.}, vol.~24, no.~3, pp.
  725--732, Sep. 2009.

\bibitem{kim2016tps}
Y.~{Kim}, E.~{Kim}, and S.~{Moon}, ``Frequency and voltage control strategy of
  standalone microgrids with high penetration of intermittent renewable
  generation systems,'' \emph{IEEE Trans. Power Syst.}, vol.~31, no.~1, pp.
  718--728, Jan. 2016.

\bibitem{bidram2012tsg}
A.~{Bidram} and A.~{Davoudi}, ``Hierarchical structure of microgrids control
  system,'' \emph{IEEE Trans. Smart Grid}, vol.~3, no.~4, pp. 1963--1976, Dec.
  2012.

\bibitem{Shahab2020tsg}
M.~A. {Shahab}, B.~{Mozafari}, S.~{Soleymani}, N.~M. {Dehkordi}, H.~M.
  {Shourkaei}, and J.~M. {Guerrero}, ``Distributed consensus-based fault
  tolerant control of islanded microgrids,'' \emph{IEEE Trans. Smart Grid},
  vol.~11, no.~1, pp. 37--47, Jan. 2020.

\bibitem{morato2020ijepes}
M.~M. Morato, P.~R. Mendes, J.~E. Normey-Rico, and C.~Bordons, ``{LPV-MPC}
  fault-tolerant energy management strategy for renewable microgrids,''
  \emph{Int. J. Elec. Power Energy Syst.}, vol. 117, p. 105644, 2020.

\bibitem{gholami2016pesgm}
S.~{Gholami}, S.~{Saha}, and M.~{Aldeen}, ``Sensor fault tolerant control of
  microgrid,'' in \emph{Proc. IEEE Power Energy Soc. Gen. Meeting}, Boston,
  USA, Jul. 2016, pp. 1--5.

\bibitem{gholami2018ijepes}
S.~Gholami, S.~Saha, and M.~Aldeen, ``Fault tolerant control of electronically
  coupled distributed energy resources in microgrid systems,'' \emph{Int. J.
  Elec. Power Energy Syst.}, vol.~95, pp. 327--340, Feb. 2018.

\bibitem{afshari2020tps}
A.~{Afshari}, M.~N. {Karrari}, H.~R. {Baghaee}, and G.~B. {Gharehpetian},
  ``Distributed fault-tolerant voltage/frequency synchronization in autonomous
  {AC} microgrids,'' \emph{IEEE Trans. Power Syst.}, pp. 1--15, 2020.

\bibitem{loser2019fallback}
I.~{L\"{o}ser}, A.~K. {Sampathirao}, S.~{Hofmann}, and J.~{Raisch}, ``Fallback
  strategies in operation control of microgrids with communication failures,''
  in \emph{Proc. IEEE Conf. Dec. Control}, Nice, France, Dec. 2019, pp.
  3885--3891.

\bibitem{afshari2020tsg}
A.~{Afshari}, M.~{Karrari}, H.~R. {Baghaee}, G.~B. {Gharehpetian}, and
  S.~{Karrari}, ``Cooperative fault-tolerant control of microgrids under
  switching communication topology,'' \emph{IEEE Trans. Smart Grid}, vol.~11,
  no.~3, pp. 1866--1879, May 2020.

\bibitem{genduso2010icem}
F.~{Genduso}, R.~{Miceli}, and G.~{Ricco Galluzzo}, ``Flexible power converters
  for the fault tolerant operation of micro-grids,'' in \emph{Proc. Int. Conf.
  Elect. Mach.}, Rome, Italy, Sep. 2010, pp. 1--6.

\bibitem{jin2018tie}
N.~{Jin}, S.~{Hu}, C.~{Gan}, and Z.~{Ling}, ``Finite states model predictive
  control for fault-tolerant operation of a three-phase bidirectional {AC/DC}
  converter under unbalanced grid voltages,'' \emph{{IEEE} Trans. Ind.
  Electron.}, vol.~65, no.~1, pp. 819--829, Jan. 2018.

\bibitem{hosseinzadeh2018tsg}
M.~{Hosseinzadeh} and F.~R. {Salmasi}, ``Fault-tolerant supervisory controller
  for a hybrid {AC/DC} micro-grid,'' \emph{IEEE Trans. Smart Grid}, vol.~9,
  no.~4, pp. 2809--2823, Oct. 2018.

\bibitem{prodan2014ijepes}
I.~Prodan and E.~Zio, ``A model predictive control framework for reliable
  microgrid energy management,'' \emph{Int. J. Elec. Power Energy Syst.},
  vol.~61, pp. 399--409, 2014.

\bibitem{prodan2015energy}
I.~Prodan, E.~Zio, and F.~Stoican, ``Fault tolerant predictive control design
  for reliable microgrid energy management under uncertainties,''
  \emph{Energy}, vol.~91, pp. 20--34, 2015.

\bibitem{ouammi2015tsg}
A.~{Ouammi}, H.~{Dagdougui}, L.~{Dessaint}, and R.~{Sacile}, ``Coordinated
  model predictive-based power flows control in a cooperative network of smart
  microgrids,'' \emph{IEEE Trans. Smart Grid}, vol.~6, no.~5, pp. 2233--2244,
  Sep. 2015.

\bibitem{yan2019tsg}
H.~{Yan}, X.~{Zhou}, H.~{Zhang}, F.~{Yang}, and Z.~{Wu}, ``A novel sliding mode
  estimation for microgrid control with communication time delays,'' \emph{IEEE
  Trans. Smart Grid}, vol.~10, no.~2, pp. 1509--1520, Mar. 2019.

\bibitem{zhang2014tie}
H.~Zhang, G.~Feng, H.~Yan, and Q.~Chen, ``Observer-based output feedback
  event-triggered control for consensus of multi-agent systems,'' \emph{{IEEE}
  Trans. Ind. Electron.}, vol.~61, no.~9, pp. 4885--4894, Sep. 2014.

\bibitem{dai2015am}
L.~Dai, Y.~Xia, Y.~Gao, B.~Kouvaritakis, and M.~Cannon, ``Cooperative
  distributed stochastic {MPC} for systems with state estimation and coupled
  probabilistic constraints,'' \emph{Automatica}, vol.~61, pp. 89--96, Nov.
  2015.

\bibitem{shen2010am}
B.~Shen, Z.~Wang, and Y.~S. Hung, ``Distributed ${H}_\infty$-consensus
  filtering in sensor networks with multiple missing measurements: The
  finite-horizon case,'' \emph{Automatica}, vol.~46, no.~10, pp. 1682--1688,
  Oct. 2010.

\bibitem{ge2019tc1148}
X.~{Ge}, Q.-L. {Han}, and Z.~{Wang}, ``A threshold-parameter-dependent approach
  to designing distributed event-triggered ${H}_{\infty}$ consensus filters
  over sensor networks,'' \emph{IEEE Trans. Cybern.}, vol.~49, no.~4, pp.
  1148--1159, Apr. 2019.

\bibitem{witsenhausen1968tac}
H.~Witsenhausen, ``Sets of possible states of linear systems given perturbed
  observations,'' \emph{IEEE Trans. Autom. Control}, vol.~13, no.~5, pp.
  556--558, Oct. 1968.

\bibitem{yang2009am}
F.~Yang and Y.~Li, ``Set-membership filtering for systems with sensor
  saturation,'' \emph{Automatica}, vol.~45, no.~8, pp. 1896--1902, Aug. 2009.

\bibitem{yang2009tac}
------, ``Set-membership filtering for discrete-time systems with nonlinear
  equality constraints,'' \emph{IEEE Trans. Autom. Control}, vol.~54, no.~10,
  pp. 2480--2486, Oct. 2009.

\bibitem{qiu2020cta}
Q.~Qiu, F.~Yang, Y.~Zhu, and E.~Mousavinejad, ``Output feedback model
  predictive control based on set-membership state estimation,'' \emph{IET
  Control Theory Appl.}, vol.~14, no.~4, pp. 558--567, Mar. 2020.

\bibitem{brenna2018tsg}
M.~{Brenna}, F.~{Foiadelli}, M.~{Longo}, and D.~{Zaninelli}, ``Energy storage
  control for dispatching photovoltaic power,'' \emph{IEEE Trans. Smart Grid},
  vol.~9, no.~4, pp. 2419--2428, Jul. 2018.

\bibitem{worthmann2015tsg}
K.~{Worthmann}, C.~M. {Kellett}, P.~{Braun}, L.~{Gr\"{u}ne}, and S.~R.
  {Weller}, ``Distributed and decentralized control of residential energy
  systems incorporating battery storage,'' \emph{IEEE Trans. Smart Grid},
  vol.~6, no.~4, pp. 1914--1923, Jul. 2015.

\bibitem{braun2016tac}
P.~{Braun}, L.~{Gr\"{u}ne}, C.~M. {Kellett}, S.~R. {Weller}, and
  K.~{Worthmann}, ``A distributed optimization algorithm for the predictive
  control of smart grids,'' \emph{IEEE Trans. Autom. Control}, vol.~61, no.~12,
  pp. 3898--3911, Dec. 2016.

\bibitem{Halvgaard2016tsg}
R.~{Halvgaard}, L.~{Vandenberghe}, N.~K. {Poulsen}, H.~{Madsen}, and J.~B.
  {J{\o}rgensen}, ``Distributed model predictive control for smart energy
  systems,'' \emph{IEEE Trans. Smart Grid}, vol.~7, no.~3, pp. 1675--1682, May
  2016.

\bibitem{shan2020tsg}
Y.~{Shan}, J.~{Hu}, and J.~M. {Guerrero}, ``A model predictive power control
  method for {PV} and energy storage systems with voltage support capability,''
  \emph{IEEE Trans. Smart Grid}, vol.~11, no.~2, pp. 1018--1029, Mar. 2020.

\bibitem{parisio2014tcst}
A.~{Parisio}, E.~{Rikos}, and L.~{Glielmo}, ``A model predictive control
  approach to microgrid operation optimization,'' \emph{IEEE Trans. Control
  Syst. Tech.}, vol.~22, no.~5, pp. 1813--1827, Sep. 2014.

\bibitem{hans2019tse}
C.~A. {Hans}, P.~{Braun}, J.~{Raisch}, L.~{Gr\"{u}ne}, and C.~{Reincke-Collon},
  ``Hierarchical distributed model predictive control of interconnected
  microgrids,'' \emph{IEEE Trans. Sustain. Energy}, vol.~10, no.~1, pp.
  407--416, Jan. 2019.

\bibitem{lofberg2004cacsd}
J.~L\"{o}fberg, ``{YALMIP}: {A} toolbox for modeling and optimization in
  {MATLAB},'' in \emph{Proc. IEEE Int. Symp. Comput. Aided Control Syst.
  Design}, Sep. 2004, pp. 284--289.

\bibitem{jos1999oms}
J.~F. Sturm, ``Using {SeDuMi} 1.02, a {Matlab} toolbox for optimization over
  symmetric cones,'' \emph{Optim. Meth. Softw.}, vol.~11, no. 1--4, pp.
  625--653, 1999.

\bibitem{gurobi2016}
\emph{Gurobi Optimizer Reference Manual}, Gurobi Optimization, Inc., Houston,
  TX, USA, Oct. 2016.

\bibitem{uqpv}
``{UQ} solar photovoltaic data,''
  \url{http://solar.uq.edu.au/user/reportPower.php}, 2019, accessed: 01 Sep.
  2019.

\end{thebibliography}

\end{document}